\documentclass[a4paper,USenglish]{lipics-v2021}
\nolinenumbers
\hideLIPIcs

\usepackage{amsmath,amsfonts,amssymb,mathtools}
\usepackage{makecell}
\usepackage{algorithmicx}
\usepackage[noend]{algpseudocode}
\usepackage{graphicx}
\usepackage{xspace}

\ccsdesc[500]{Theory of computation~Concurrent algorithms}

\keywords{Shared memory, Concurrency, Non-blocking, Multi-word, Register, Space}

\acknowledgements{
We thank the anonymous reviewers of the paper for their helpful comments and suggestions.
We also thank Faith Ellen, Tian Ze Chen and Alexander Spiegelman for the many helpful discussions.}

\funding{This work was supported by the Natural Sciences and Engineering Research Council of Canada (NSERC) through a Discovery Grant and an Undergraduate Student Research Award (USRA).}

\newtheorem{invariant}[theorem]{Invariant}

\providecommand{\ReadOp}[1]{\ensuremath{\mathsf{Read}_{#1}}\xspace}
\providecommand{\Write}{\ensuremath{\mathsf{Write}}\xspace}

\providecommand{\ReadReq}[1]{\ensuremath{\mathsf{ReadReq}_{#1}}}
\providecommand{\ReadAck}[1]{\ensuremath{\mathsf{ReadAck}_{#1}}}
\providecommand{\AttemptReq}[1]{\ensuremath{\mathsf{AttemptReq}_{#1}}}
\providecommand{\AttemptAck}[1]{\ensuremath{\mathsf{AttemptAck}_{#1}}}
\providecommand{\PieceAck}[1]{\ensuremath{\mathsf{PieceAck}_{#1}}}
\providecommand{\PieceReady}[1]{\ensuremath{\mathsf{PieceReady}_{#1}}}
\providecommand{\Mailbox}[1]{\ensuremath{\mathsf{Mailbox}_{#1}}}

\providecommand{\localVal}[1]{\ensuremath{\mathsf{localVal}_{#1}}}
\providecommand{\idx}[1]{\ensuremath{\mathsf{idx}_{#1}}}

\title{Upper and Lower Bounds on the Space Complexity of Multi-word Single-Writer Registers}

\author{Yuanhao Wei}
{University of British Columbia, Vancouver, Canada}
{yuanhaow@cs.ubc.ca}
{https://orcid.org/0000-0002-5176-0961}
{}

\author{Yousof Yavari}
{University of British Columbia, Vancouver, Canada}
{yousofy@student.ubc.ca}
{https://orcid.org/0009-0003-5069-5437}
{}

\authorrunning{Y. Wei and Y. Yavari}

\Copyright{Yuanhao Wei and Yousof Yavari}

\titlerunning{Space Complexity of Multi-word Single-Writer Registers}

\EventEditors{Ioannis Chatzigiannakis, Andrea Vitaletti, Keren Censor-Hillel, and William K. Moses Jr.}
\EventNoEds{4}
\EventLongTitle{40th International Symposium on Distributed Computing (DISC 2026)}
\EventShortTitle{DISC 2026}
\EventAcronym{DISC}
\EventYear{2026}
\EventDate{November 9--13, 2026}
\EventLocation{Rome, Italy}
\EventLogo{}
\SeriesVolume{397}
\ArticleNo{49}

\begin{document}

\maketitle
	
\begin{abstract}    
	We prove matching upper and lower bounds on the space complexity of simulating a large shared register using smaller shared registers. We focus on the case where both the simulated and base registers are single-writer, which means they can be accessed concurrently by multiple readers but only by a single writer. 
    To strengthen our lower bounds, we prove that they hold even when the base registers are atomic and the simulated register is regular. Furthermore, the lower bounds hold for obstruction-free implementations, which means they also hold for lock-free and wait-free implementations. 
    
	If $m$ is the number of values representable by the large register and $b$ is the number of values representable by each base register, our first lower bound says that any obstruction-free implementation that has an invisible reader requires at least $\lceil \frac{m-1}{b-1} \rceil$ base registers. A reader is considered invisible if it never writes to base registers. This lower bound is asymptotically tight for the invisible-reader case and represents an exponential improvement over the previous best known lower bound.
    For the general case, which allows any combination of visible and invisible readers, we prove a \hbox{$\lceil \min(\frac{m-1}{b-1}, r+\frac{\log{m}}{\log{b}}) \rceil$} space lower bound, where $r$ is the number of readers. 

    To show that this lower bound is asymptotically tight, we develop a wait-free algorithm for simulating a multi-word atomic register from atomic base registers using $\Theta(r + \frac{\log{m}}{\log{b}})$ space. Combining this algorithm with known invisible-reader constructions gives a $\Theta(\min(\frac{m}{b}, r + \frac{\log{m}}{\log{b}}))$ space upper bound.
    This improves upon the previously known space upper bound of $\Theta(\min(\frac{m}{b}, r\frac{\log{m}}{\log{b}}))$.

	\end{abstract}
	
	\section{Introduction}
\label{sec:intro}
Shared registers are among the most fundamental abstractions in concurrent and distributed computing. A register supports two operations: a write, which stores a value, and a read, which retrieves one. Registers are classified by their consistency guarantees — safe, regular, and atomic — in increasing order of strength~\cite{lamport1986interprocess}. A safe register guarantees that a read not concurrent with any write returns the last written value, while concurrent reads may return any value in the register's domain. A regular register additionally ensures that a concurrent read returns either the last written value or the value being written concurrently. An atomic register, the strongest of the three, additionally requires that operations appear to take effect instantaneously at some point during their execution, inducing a total order consistent with real time — a property also known as linearizability. In this paper, we focus our attention on regular and atomic registers.

In practice, the values a register must store can be larger than the base registers available in hardware. 
Implementing multi-word shared registers is challenging because data being accessed by slow readers can be overwritten by fast writers.
To ensure the regular and atomic consistency guarantees, careful synchronization and additional base registers are required.
This gives rise to a natural problem in the theory of shared memory algorithms: 
how many $b$-valued regular or atomic registers are needed to implement an $m$-valued register of the same type?


Since Lamport's foundational work~\cite{lamport1986interprocess}, a rich line of research~\cite{Peterson, chaudhuri1994bounds, vidyasankar1988converting, chaudhuri2000one, chen2017step} has studied this question in the setting where the implemented and the base registers are single-writer (SW), meaning that each register can be accessed concurrently by multiple readers but only a single process can write to the register.
This line of work has focused on developing \emph{wait-free} implementations where each read and write is guaranteed to finish within a finite number of its own steps regardless of the schedule.
The step and space complexity of these implementations are summarized in Table~\ref{complexity-table}, where $r$ represents the number of readers.

\begin{table}[t]
\centering
\footnotesize
    \setlength{\tabcolsep}{3pt}
    \resizebox{\textwidth}{!}{\begin{tabular}{ | l | c | c | c | c | c | c | }
    \hline
    \textbf{Algorithm} & \textbf{Simulated} & \textbf{Base} & \textbf{Readers} & \textbf{Space} & \textbf{Read} & \textbf{Write}  \\
    \textbf{} & \textbf{Register} & \textbf{Register} & \textbf{Invisible} & \textbf{} & \textbf{} & \textbf{}  \\
    \hline
     Peterson \cite{Peterson} & atomic & atomic & no & $\Theta(r \frac{\log m}{\log b})$ & $\Theta(\frac{\log m}{\log b})$ & $\Theta(r \frac{\log m}{\log b})$ \\
    \hline
    Chaudhuri and Welch \cite{chaudhuri1994bounds} & regular & regular & yes & $\Theta(\frac{m}{b})$ & $\Theta(\frac{\log m}{\log b})$ & $\Theta(\frac{\log m}{\log b})$  \\
    \hline
    Vidyasankar \cite{vidyasankar1988converting} & atomic & atomic & yes & $\Theta(\frac{m}{\log b})$ & $\Theta(\frac{m}{\log b})$ & $\Theta(\frac{m}{\log b})$  \\
    \hline
    \makecell[l]{Vidyasankar \cite{VIDYASANKAR1991323} + \\ Chaudhuri and Welch \cite{chaudhuri1994bounds}} & atomic & atomic & yes & $\Theta(\frac{m}{b})$ & $\Theta(\frac{\log m}{\log b})$ & $\Theta(\frac{\log m}{\log b})$  \\
    \hline
    Chaudhuri, Kosa and Welch \cite{chaudhuri2000one} & atomic & atomic & yes & $\Theta(m^2)$ & $\Theta(m^2)$ & 1 \\
    \hline
    Chaudhuri, Kosa and Welch \cite{chaudhuri2000one} & regular & regular & yes & $\Theta(m^2)$ & $\Theta(m^2)$ & 1 \\
    \hline
    Chen and Wei \cite{chen2017step} & atomic & atomic & yes & $\Theta(\frac{m^2}{b^2})$ & $\Theta(\frac{\log m}{\log b})$ & $\Theta(\frac{\log m}{\log b})$  \\
    \hline
     Chen and Wei \cite{chen2017step} & atomic & atomic & no & $\Theta(r\frac{\log m}{\log b})$ & $\Theta(\frac{\log m}{\log b})$ & $\Theta(\frac{\log m}{\log b})$ \\
    \hline
    This paper & atomic & atomic & no & $\Theta(r + \frac{\log m}{\log b})$ & $\Theta(\frac{\log^2 m}{\log^2 b})$ & $\Theta(r+ \frac{\log m}{\log b})$ \\
    \hline
    \end{tabular}}
    \caption{Wait-free $m$-valued SW register implementations from $b$-valued SW registers, where $r$ is the number of readers.}
    \label{complexity-table}
\end{table}

Existing implementations of multi-word SW regular or atomic registers from base registers of the same type fall into two broad categories:
ones where all readers are invisible (meaning they do not write to any base registers) and ones where all readers are visible.
The invisible-reader implementations in Table~\ref{complexity-table} all use at least $\Omega(\frac{m}{b})$ space, which is exponentially higher than the trivial information-theoretic lower bound of $\Omega(\frac{\log m}{\log b})$.
Researchers have improved the constant factors of this lower bound~\cite{chaudhuri1994bounds}, but asymptotic improvements have only been shown for restricted classes of algorithms~\cite{BergerIntegrated,chaudhuri2000one}.
A similar but smaller gap exists in the visible reader case.
Since all readers write to a different SW base register, 
it is easy to show a space lower bound of $\lceil r + \frac{\log m}{\log b}\rceil$ in this setting. 
However, the visible-reader implementations in Table~\ref{complexity-table} all use at least $\Omega(r \frac{\log m}{\log b})$ space, and a natural question is whether this is inherently required.

In this paper, we close both gaps by exponentially strengthening the lower bound in the invisible reader setting and improving the upper bound in the visible reader setting.
Specifically, we prove that any algorithm with at least one invisible reader requires at least $\lceil \frac{m-1}{b-1} \rceil$ base registers. This lower bound holds even when implementing a regular register from atomic base registers, so it also holds in the regular-from-regular and atomic-from-atomic cases. Our lower bound is proven for the weaker obstruction-free progress property, which means it also holds for lock-free and wait-free implementations. 
This lower bound is asymptotically tight because there exist wait-free regular-from-regular~\cite{chaudhuri1994bounds} and atomic-from-atomic~\cite{VIDYASANKAR1991323} implementations that use $\Theta(\frac{m}{b})$ space.
In the setting with visible readers, we develop a wait-free algorithm for the atomic-from-atomic case that uses only $\Theta(r + \frac{\log m}{\log b})$ base registers. Chen and Wei's visible-reader algorithm uses $\Theta(r\frac{\log m}{\log b})$ space and $\Theta(\frac{\log m}{\log b})$ steps per read and write~\cite{chen2017step}, whereas our algorithm uses $\Theta(r+\frac{\log m}{\log b})$ space, $\Theta(\frac{\log^2 m}{\log^2 b})$ read steps, and $\Theta(r+\frac{\log m}{\log b})$ write steps. Whether the optimal $\Theta(r+\frac{\log m}{\log b})$ space bound can be achieved together with $\Theta(\frac{\log m}{\log b})$ read and write step complexity remains open.

Putting together the case with at least one invisible reader and the case in which every reader is visible, we show a lower bound of \hbox{$\lceil \min(\frac{m-1}{b-1}, r+\frac{\log{m}}{\log{b}}) \rceil$} and a matching upper bound of $\Theta(\min(\frac{m}{b}, r+\frac{\log{m}}{\log{b}}))$ on the space complexity of implementing multi-word SW registers. The lower bound holds in the weaker regular-from-atomic setting and it holds for all obstruction-free implementations. The upper bound is wait-free and holds in the atomic-from-atomic setting.
It can be derived by combining our algorithm with Vidyasankar's~\cite{VIDYASANKAR1991323} plus Chaudhuri and Welch's~\cite{chaudhuri1994bounds} algorithm from Table~\ref{complexity-table}, and using whichever requires less space.
The invisible-reader construction uses $\Theta(\frac{m}{b})$ space, whereas our visible-reader construction uses $\Theta(r+\frac{\log m}{\log b})$ space. Therefore, the former gives the smaller asymptotic space bound when $\frac{m}{b}$ is smaller, and the latter does so when $r+\frac{\log m}{\log b}$ is smaller.

\textbf{Outline.} The remainder of the paper is organized as follows. Related work is described in Section~\ref{sec:related}. The model of computation and important terminology are defined in Section~\ref{sec:model}. Our lower bound is proven in Section~\ref{sec:lb} and the algorithm for our upper bound is described in Section~\ref{sec:upper}. Finally, we conclude in Section~\ref{sec:concl}.

\section{Related Work}
\label{sec:related}

Table~\ref{complexity-table} summarizes the step and space complexity of known implementations of an $m$-valued SW register from $b$-valued base registers of the same type; all of them are wait-free. We organize prior work into the space upper bounds these implementations achieve and the lower bounds known for the problem, relating each to our results.
 
\textbf{Prior upper bounds.}
In the invisible-reader setting, Chaudhuri and Welch~\cite{chaudhuri1994bounds} build a tree over the represented values to implement a multi-word regular register from regular base registers. 
Their construction was first given for $b = 2$, and Chen and Wei~\cite{chen2017step} later generalized it to arbitrary $b \geq 2$. When $m$ is a power of $b$, it uses $\frac{m-1}{b-1}$ base registers, matching our lower bound exactly. For the atomic case, Vidyasankar~\cite{VIDYASANKAR1991323} implements an $m$-valued atomic register from two $m$-valued regular registers and one binary atomic register; instantiating the two regular registers with the algorithm of Chaudhuri and Welch~\cite{chaudhuri1994bounds} yields an $m$-valued atomic register from $b$-valued atomic registers using $2\frac{m-1}{b-1} + 1$ space when $m$ is a power of $b$, which matches our lower bound asymptotically. In the visible-reader setting, the best previously known space bound is $\Theta(r \frac{\log m}{\log b})$, achieved by Peterson~\cite{Peterson} and by Chen and Wei~\cite{chen2017step}; our visible-reader algorithm improves this to $\Theta(r + \frac{\log m}{\log b})$.

\textbf{Prior lower bounds.}
The only previously known lower bound that applies to all implementations is the trivial $\Omega(\frac{\log m}{\log b})$ bound, refined by Chaudhuri and Welch~\cite{chaudhuri1994bounds} to $\lceil \max(\log m + 1, (2\log m) - (\log \log m) - 2) \rceil$ for regular registers built from binary regular registers. This is exponentially smaller than the $\Omega(\frac{m}{b})$ space used by every known invisible-reader implementation, and closing this gap is the main contribution of our lower bound. Stronger bounds were known only for restricted classes of algorithms. In the case of algorithms in which each simulated write performs only a single base-register write, Chaudhuri, Kosa and Welch~\cite{chaudhuri2000one} prove two space lower bounds. The first is a $2m - 1 - \lfloor \log m \rfloor$ lower bound for one-write implementations satisfying the \emph{symmetric property}, meaning that whenever a write changing the logical value from $v$ to $w$ modifies a base register $x$, if the subsequent write changes the value back from $w$ to $v$, then the write restores $x$ to its original contents.
For a more restricted class of one-write implementations satisfying the \emph{toggle} property where, for each unordered pair of logical values, there is a unique register that is changed when switching between the two logical values, they show that $\Omega(m^2)$ space is needed, establishing that their one-write algorithm is space-optimal for this class.
Berger, Keidar and Spiegelman~\cite{BergerIntegrated} consider implementations in which each simulated read must observe at least $\tau \geq 2$ values written by the same simulated write before it may return. They prove that any wait-free regular-from-atomic implementation in this class requires at least $\tau m$ space with an invisible reader and $\tau + (\tau-1)\min(m-1, r)$ space in general. Their bound helps explain the space usage of the visible-reader implementations of Peterson~\cite{Peterson} and Chen and Wei~\cite{chen2017step}, for which $\tau = \frac{\log m}{\log b}$. However, every invisible-reader implementation in Table~\ref{complexity-table} has $\tau \leq 1$, so this bound does not apply to any of them; in contrast, our $\lceil \frac{m-1}{b-1} \rceil$ bound holds for all invisible-reader implementations.

 
\textbf{Other models.}
Some papers~\cite{Larsson} assume stronger atomic primitives such as swap, fetch-and-add, and compare-and-swap; we focus on implementations that use only read/write base registers.

    \section{Model}
\label{sec:model}



We will work in the standard asynchronous shared memory model \cite{attiya2004distributed} with $r$ readers and one writer, which communicate through shared base registers. Unless otherwise stated, all base registers are atomic. Processes may fail by crashing, and each process invokes at most one operation at a time.


An \emph{execution} is an alternating sequence of \emph{configurations} and \emph{steps} $C_0, e_1, C_1, e_2, C_2, \dots$, beginning with an \emph{initial configuration} $C_0$. Each step $e_i$ is a read or write of a single base register performed by one process, and the configuration $C_i$ records the state of every base register together with the local state of every process after $e_i$ is applied to $C_{i-1}$. Given two configurations $C$ and $C'$ of the same execution with $C$ no later than $C'$, the \emph{interval} from $C$ to $C'$ consists of all configurations and steps from $C$ through $C'$, inclusive.



An operation is \emph{complete} if it has returned a response, and \emph{pending} otherwise. The \emph{execution interval} of a complete operation is the set of all configurations and steps from its first step to the configuration immediately after its last step; for a pending operation, the execution interval consists of all configurations and steps from its first step onward. Two operations are \emph{concurrent} if their execution intervals intersect.



A register is \emph{atomic} if its operations are linearizable~\cite{HW}: for every execution, each completed operation can be assigned a \emph{linearization point} --- a configuration within its execution interval --- so that all operations behave as if they occurred sequentially in the order of their linearization points. When we say that an operation is \emph{linearized at a step}, we mean that its linearization point is the configuration immediately after that step.
 
A register is \emph{regular} if every read returns either the initial value if no write completed before the read's first step, the value written by the last write that completed before the read's first step, or the value written by some write concurrent with the read. Every atomic register is also regular.



An implementation is \emph{obstruction-free} if, from any reachable configuration, an operation that runs in isolation --- with all other processes suspended --- terminates in a finite number of steps. It is \emph{lock-free} if, in every infinite execution, infinitely many operations complete, and \emph{wait-free} if every operation completes in a finite number of its own steps, regardless of the steps taken by other processes. Every wait-free implementation is lock-free, and every lock-free implementation is obstruction-free.


The \emph{step complexity} of an operation is the maximum number of base-register accesses it performs, over all executions. The \emph{space complexity} of an implementation is the number of base registers it uses.



    \section{Space Lower Bound}
\label{sec:lb}
This section proves new lower bounds on the number of atomic base registers needed to implement a multi-word regular register. Throughout, we abbreviate ``obstruction-free implementation of a regular SW register from atomic SW base registers'' to simply \emph{implementation}. The first bound, Theorem~\ref{thm:lb_inv}, applies whenever at least one reader is invisible, including implementations that also contain visible readers. The second, Theorem~\ref{thm:lb_vis}, extends it to all implementations by combining this case with the case in which every reader is visible.

We begin by introducing the notions used in the proofs. Throughout this section, let $m$ be the number of values of the simulated register and let $N$ be the number of base registers of the implementation under consideration. Let $b_i$ be the number of values that the $i^\text{th}$ base register can represent.
The \emph{total fanout} of the implementation, denoted $S$, is the sum $\sum_i b_i$ of these quantities. When every base register is $b$-valued we have $S = Nb$, but it is convenient in the proof to allow base registers of different sizes.

The intuition behind the term total fanout comes from viewing the invisible reader's algorithm as a decision tree. 
The internal nodes of the decision tree are labeled by base registers and the same base register can appear multiple times in the decision tree.
The leaves are labeled by return values.
When a reader reads the $i$-th base register, the algorithm can branch in $b_i$ different directions based on the value read; thus, $b_i$ represents the branching factor, or \emph{fanout} of that register. 

\subsection{Proof Overview}
The key step is Lemma~\ref{lem:lb_main}, which transforms an $m$-valued implementation with total fanout $S$ and an invisible reader into an $(m-1)$-valued implementation with total fanout $S-1$. We later apply this lemma inductively to argue that $S$ must be large whenever $m$ is large.
To prove Lemma~\ref{lem:lb_main}, we fix one invisible reader and represent its algorithm as a decision tree.
We select a minimum-depth leaf $\ell$ returning a value $v$. Let $p$ be its parent, let $t$ be the base register read at $p$, and let $x$ be the value labeling the edge from $p$ to $\ell$. If returning $v$ is legal whenever the reader is poised at $p$, we simplify the algorithm by replacing the subtree rooted at $p$ by a leaf returning $v$. Repeating this simplification eventually yields an execution in which a pending read $R$ is poised at $p$, but returning $v$ would violate regularity.

From this execution, we complete every other pending operation while leaving $R$ paused, resulting in a configuration $C$. Since $v$ is not a legal response for $R$ at this point due to regular-register semantics, no write of $v$ is pending or concurrent with $R$. Since no new operations are invoked, no write of $v$ occurs while the other pending operations are completed, and $t$ cannot take the value $x$ during this interval; otherwise, resuming $R$ immediately after $t$ takes on the value $x$ would make it return the illegal value $v$. We then run $R$ to completion to obtain a configuration with no pending operations. Because the reader is invisible, this configuration has the same shared memory and the same local states for every other process as the preceding configuration in which $R$ was still poised at $p$.

We use this resulting configuration, in which all pending operations including $R$ have completed, as the initial configuration of an implementation whose simulated domain excludes $v$. If $t$ could ever contain $x$ in an execution of this $(m-1)$-valued register implementation, we could remove the invisible reader's steps, replay the remaining execution from the preceding configuration $C$ in which all other pending operations have completed and $R$ is still poised at $p$, and then resume $R$. It would read $x$ and return $v$, contradicting regularity. Thus $x$ is never stored in $t$ in the $(m-1)$-valued implementation and can be removed from $t$'s domain, decreasing total fanout by one.
Thus, we obtain an $(m-1)$-valued implementation with $N$ base registers, total fanout $S-1$, and an invisible reader, as claimed.
Lemma~\ref{lem:inv} applies this reduction inductively to show that $S-N+1\geq m$. The invisible-reader bound, Theorem~\ref{thm:lb_inv}, proves that when all base registers are $b$-valued, $S=Nb$ yields $N \geq \lceil \frac{m-1}{b-1} \rceil$. If every reader is visible, the readers require $r$ distinct base registers and the writer requires at least $\lceil \frac{\log{m}}{\log{b}} \rceil$ additional registers. The general lower bound, Theorem~\ref{thm:lb_vis}, follows by combining Theorem~\ref{thm:lb_inv} with the lower bound for implementations in which every reader is visible.

    \subsection{Lower Bound Proof}
    Before diving into the main technical lemma, we define a useful predicate. 

\begin{definition}
\label{def:e_pred}
    The predicate $E(m, N, S)$ holds if there exists an obstruction-free implementation of an $m$-valued regular SW register using $N$ atomic SW base registers with total fanout $S$ in which at least one reader is invisible.
\end{definition}
    
    
    \begin{lemma}
    \label{lem:lb_main}
        For $m \geq 2$, $E(m, N, S)$ implies $E(m-1, N, S-1)$.
    \end{lemma}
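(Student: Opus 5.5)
We follow the outline given in the proof overview. Fix an implementation witnessing $E(m,N,S)$ and an invisible reader $q$. Because $q$ never writes, its read algorithm, started from any local state it has between operations, is a decision tree. Its internal nodes are labeled by base registers; a node labeled by register $i$ has at most $b_i$ children, one per value read, and its leaves are labeled by return values. Obstruction-freedom guarantees that every solo path is finite. Paths that are never taken in any reachable execution can be pruned, and if needed we take $q$'s local state between reads to be fixed, which we may arrange since $q$ is invisible. Since $m \ge 2$ and the register is regular, sequential executions force every value of the domain to appear at some leaf.

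\textbf{Step 1 (simplification loop).} Pick a minimum-depth leaf $\ell$, returning some value $v$, with parent $p$, parent register $t$ and edge value $x$. Suppose that in every reachable execution where $q$'s read $R$ is poised at $p$, returning $v$ is legal. Then we replace the subtree at $p$ by the leaf $v$. The result is still a correct obstruction-free implementation with $q$ invisible. The tree has shrunk, and the total fanout is unchanged, since fanout counts register domain sizes rather than tree nodes. We repeat this. The process terminates because depths are finite, and a leaf must eventually fail to be collapsible. Otherwise the root itself would become a leaf, so $q$ would always return one fixed value, which contradicts regularity when $m\ge2$. So we obtain an execution $\alpha$ with $R$ poised at $p$ in which returning $v$ is illegal.

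\textbf{Step 2 (building the new initial configuration).} From the end of $\alpha$, we run every other pending operation to completion in isolation, one at a time, while keeping $R$ paused. This yields a configuration $C$. We argue two things about the interval from the end of $\alpha$ to $C$.
\begin{itemize}
\item No write of $v$ is invoked, because no new operations start. Hence returning $v$ stays illegal for $R$ throughout.
\item Consequently $t$ never holds $x$ during this interval. If it did, pausing there and letting $R$ read $t$ would send $R$ to the leaf $\ell$, and $R$ would return the illegal value $v$.
\end{itemize}
Next we run $R$ to completion to reach a configuration $C'$. Because $q$ is invisible, $C'$ agrees with $C$ on shared memory and on the states of all other processes. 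We also reset $q$ to its idle state.

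\textbf{Step 3 (the new $(m-1)$-valued implementation).} We define a new implementation with initial configuration $C'$, the same algorithms, and domain equal to the old domain with $v$ removed. If $v$ was the initial value, we relabel: the current value in $C'$ is the last written value, which is not $v$. Correctness follows because every execution of the new implementation from $C'$ is an execution of the old one after a prefix. Regularity is preserved since, in that prefix, all operations other than $R$ are complete, so no pending write leaks into later reads. Obstruction-freedom is inherited directly.

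\textbf{Step 4 (removing $x$ from $t$'s domain).} The key claim is that no execution from $C'$ of the new implementation ever writes $x$ into $t$. Suppose some execution $\beta$ did. Delete $q$'s steps from $\beta$; since $q$ is invisible, the other processes behave identically. Replay the result from $C$, which is indistinguishable from $C'$ for everyone except $q$. Stop right after $t$ becomes $x$, and then let $R$ resume. Every write in $\beta$ writes a value $\neq v$, and $R$'s interval contains no write of $v$. Yet $R$ reads $x$ from $t$ and returns $v$, contradicting regularity. Therefore we can shrink $t$'s domain to $b_t-1$ values, re-encoding the remaining values, which gives $E(m-1,N,S-1)$. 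If this leaves $b_t=1$, the register still counts toward $N$, which is harmless.

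\textbf{Main obstacle.} The main obstacle is Step 1. One must define precisely when "returning $v$ is legal whenever poised at $p$" holds, so that the collapse preserves regularity for all executions and not just some of them. One must also handle the case where $q$'s decision tree depends on its local state across successive reads. I would first address this by noting that an invisible reader's state can be reset before each read without affecting correctness.
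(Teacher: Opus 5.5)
Your proposal is correct and follows essentially the same route as the paper's proof. The steps match: reset the invisible reader's post-return state so that it always uses one tree; repeatedly collapse the parent of a minimum-depth leaf until returning $v$ at $p$ becomes illegal; complete the other pending operations and then $R$, and use the resulting quiescent configuration as the new initial configuration with $v$ dropped; finally, show that $t$ never holds $x$ by deleting the reader's steps and replaying from the configuration where $R$ is still poised at $p$. The one point to state more carefully is termination of the collapse loop: measure it by the minimum leaf depth, which is finite by a solo run and drops by exactly one per collapse, rather than by the tree ``shrinking'', since the tree may have infinite paths.
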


\begin{proof}
Without loss of generality, we represent the algorithm of an invisible reader as a set of decision trees, one for each possible local state at the start of a read operation, since the reader may retain local state between operations. Each internal node represents the reader's complete local state immediately before its next base-register read and is labeled by that base register. Each such node has $b_i$ outgoing edges, where $i$ is the index of the base register labeling the node, and each outgoing edge is labeled by a value that can be read from the base register. The invisible reader chooses a decision tree based on its local state and starts at the root of that decision tree. It then repeatedly performs the following steps until it reaches a leaf: read the register labeling the current node and follow the edge corresponding to the value that was read. After following an edge from node $n_1$ to node $n_2$, we say that the reader has \emph{reached} node $n_2$. Each leaf is labeled by a value from the simulated register and by the complete local state of the reader after returning that value. Once the invisible reader reaches a leaf, it returns the value labeling that leaf and enters the corresponding local state. We say that an invisible reader is \emph{poised at} a node in its decision tree if it has reached that node but has not yet performed the corresponding read.

For a pending read $R$ after an execution $E$, we say that $R$ can return a value $v$ immediately after $E$ if returning $v$ at that point would satisfy regular-register semantics; that is, $v$ is the initial value if no write completed before $R$ began, the value of the last write that completed before $R$ began, or the value of a write concurrent with $R$.

We say that a process \emph{runs solo} during an interval of an execution if no other processes take any steps in that interval.

Suppose $E(m, N, S)$ holds, witnessed by an implementation $A_{m}$ satisfying Definition~\ref{def:e_pred}. 
Let $IV$ be one of the invisible readers in $A_m$. Fix the initial configuration of $A_m$, let $s$ be the local state of $IV$ in that configuration, and consider the decision tree $D$ for a read invoked by $IV$ from local state $s$. To simplify $A_m$, we begin by changing the post-return local state of every leaf in $D$ to $s$ and we call the resulting implementation $A_m'$.
This change forces $IV$ to use $D$ for every read operation.

We first show that $A_m'$ still implements a regular register. Suppose otherwise, and let $E$ be an execution prefix of $A_m'$ ending when a read $R$ returns a value that violates regular-register semantics. If $R$ is performed by $IV$, let $E'$ be obtained from $E$ by removing all earlier completed reads by $IV$, but retaining $R$. Since every read by $IV$ in $A_m'$ leaves it in state $s$, the read $R$ begins in state $s$. After removing the earlier reads, $IV$ also begins $R$ in state $s$ in $A_m$ and therefore follows the same decision tree $D$. If $R$ is performed by another reader, let $E'$ be obtained from $E$ by removing all steps by $IV$. In either case, since $IV$ is invisible, the removed steps do not change shared memory or the local states of any other process. Therefore, there is a corresponding execution prefix of $A_m$ in which the same base-register steps occur as in $E'$, and $R$ reads the same base-register values and returns the same value. The writer steps and their order relative to $R$ are unchanged. Therefore, the initial value if no write completed before $R$ began, or otherwise the value of the last such write, and the values written by all writes concurrent with $R$ are unchanged. Thus, $R$ violates regular-register semantics in $A_m$, contradicting the correctness of $A_m$. Therefore, $A_m'$ implements a regular register.

We next show that $A_m'$ is also obstruction-free. Consider any configuration $C$ reachable in $A_m'$, and fix an execution of $A_m'$ ending in $C$. If $IV$ has a pending read, remove from this execution all of its earlier completed reads while retaining the pending read. Otherwise, remove all of its completed reads. Since every completed read by $IV$ in $A_m'$ leaves it in state $s$ and $IV$ is invisible, the resulting sequence is an execution of $A_m$. If $IV$ has a pending read, that read began in state $s$, and the two implementations behave identically until the read returns. Thus, the resulting sequence reaches the same complete configuration $C$ in $A_m$. Therefore, any operation running solo from $C$ takes the same steps until it completes as it would in $A_m$, and hence terminates after finitely many steps. For the remainder of the proof, we write $A_m$ for $A_m'$.


Consider a leaf $\ell$ of minimum depth in the decision tree $D$ and let $v$ be its value. The decision tree is fixed by $IV$'s algorithm and the chosen starting local state $s$: the actions of other processes do not change the tree, but may change the base-register values observed by $IV$ and therefore the path it follows. Although the tree may have infinite depth and may contain infinite-length paths, it has at least one leaf at finite depth: let $IV$ invoke the read from the initial configuration, suspend all other processes, and let $IV$ run alone. The values returned by its base-register reads determine one path through the tree. By obstruction-freedom, this solo read terminates after finitely many of its own steps, so this path reaches a leaf at finite depth. Thus, at least one leaf has finite depth, and consequently the minimum leaf depth is finite. Moreover, $\ell$ is not the root: starting from the initial configuration, consider two executions in which $IV$ takes no steps: in one, the writer runs solo and completes a write of some value $a$, and in the other, the writer runs solo and completes a write of a different value $b$. In both executions, $IV$ subsequently invokes a read from the same local state $s$. If the root were a leaf, the read would return the same value in both executions, contradicting regular-register semantics in at least one of them. Let $p$ be the parent of $\ell$. If $IV$ can return $v$ immediately after every execution in which it is poised at $p$, replace the subtree rooted at $p$ with
a leaf labeled by $v$ and the post-return local state $s$. This preserves correctness because every newly introduced return of $v$ is permitted by regular-register semantics, $IV$ does not modify shared memory, and its post-return local state is
the initial local state of IV.
It preserves obstruction-freedom because it only shortens reads. The minimum leaf depth decreases by one. Repeating this process must terminate because the minimum leaf depth is a nonnegative integer, and it cannot decrease to zero by the preceding argument. Continue to denote the resulting implementation by $A_m$. We therefore obtain a leaf $\ell$ with value $v$, its parent $p$, and an execution $E_0$ ending in a configuration $C_0$ after which a pending read $R$ by $IV$ is poised at $p$ and cannot return $v$ immediately without violating regular-register semantics.

Since $R$ cannot return $v$ immediately after $E_0$, if no write completed before $R$ began, then the initial value is not $v$; otherwise, the last write that completed before $R$ began did not write $v$. Moreover, no write of $v$ is concurrent with $R$ in $E_0$. In particular, there is no pending write of $v$ at $C_0$. Extend $E_0$, without invoking any new operations, by running each process with a pending operation, except for the pending read $R$, solo until its operation completes, which it does after a finite number of steps by obstruction-freedom. Let $E_1$ denote the resulting execution and let $C_1$ denote its final configuration. No write of $v$ occurs between $C_0$ and $C_1$, so $R$ still cannot return $v$ at $C_1$.

Let $x$ be the value labeling the edge from $p$ to $\ell$, and let $t$ be the register labeling $p$. We next show that $t$ does not store $x$ at $C_0$ or in any configuration between $C_0$ and $C_1$. Suppose otherwise. At the first such configuration, we could resume $R$; it would read $x$ from $t$, reach $\ell$, and return $v$. Since no write of $v$ occurred, this would violate regular-register semantics. In particular, $t \neq x$ at $C_1$.

Extend $E_1$ by running $R$ solo until it completes, which it does after a finite number of steps by obstruction-freedom. Let $E_2$ denote the resulting execution and let $C_2$ denote its final configuration. Because $IV$ is invisible, $C_1$ and $C_2$ contain the same values in every base register and the same local states for every process other than $IV$. In particular, $t \neq x$ at $C_2$. Every process has no pending operation at $C_2$. Let $w_0$ be the value of the last write that completed before $C_2$, or the original initial value if no write has completed. We have $w_0 \neq v$, since if no write completed before $R$ began, then the initial value was not $v$; otherwise, the last write preceding $R$ did not write $v$. Moreover, no write of $v$ was concurrent with $R$ or occurred while constructing $C_2$.

We define an implementation $A_{m-1}$ of an $(m-1)$-valued regular register by (1) using the same set of base registers as $A_m$, (2) using $C_2$ as the initial configuration and $w_0$ as the initial simulated value, (3) using the same reader algorithms as in $A_m$, and (4) using the same writer algorithm as $A_m$ for all values $w \neq v$. This is a correct implementation of a regular $(m-1)$-valued register because any execution of $A_{m-1}$ can be concatenated after $E_2$ to obtain an execution of $A_m$. Since every process has no pending operation at $C_2$ and its last completed write has value $w_0$, an incorrect execution of $A_{m-1}$ would therefore give an incorrect execution of $A_m$. Obstruction-freedom is also inherited because every configuration reachable in $A_{m-1}$ is reachable in $A_m$, and the processes use the same algorithms.

We argue that $t$ never stores value $x$ in any configuration of any execution of $A_{m-1}$. Suppose for contradiction that some execution $E_3$ of $A_{m-1}$ ends in a configuration $C_3$ in which $t$ stores $x$. Remove from $E_3$ all steps taken by $IV$, resulting in an execution $E_4$. Since $IV$ is invisible, removing its steps does not change the shared-memory states observed by any other process. Furthermore, $C_1$ and $C_2$ have identical shared memory and identical local states for every process other than $IV$. Therefore, the same sequence of steps in $E_4$ is executable from $C_1$ and reaches the same shared-memory configuration; in particular, it reaches a configuration in which $t$ stores $x$.

Concatenating $E_1$ with the same sequence of steps as in $E_4$ produces an execution $E_5$ of $A_m$ in which the original read $R$ remains poised at $p$. If we extend $E_5$ by resuming $R$, it reads $x$ from $t$, reaches $\ell$, and returns $v$. If no write completed before $R$ began, then the initial value is not $v$; otherwise, the last write that completed before $R$ began did not write $v$. Moreover, every write concurrent with $R$ writes a value different from $v$: this was true at $C_0$, no write of $v$ occurred while constructing $C_1$, and $A_{m-1}$ does not permit writes of $v$. Thus, $R$ returning $v$ contradicts the correctness of $A_m$. Therefore, $t$ never stores $x$ in any reachable configuration of $A_{m-1}$.

Since $t$ does not initially store $x$ and never stores $x$ in any reachable execution, we can remove $x$ from the set of values representable by $t$ without changing any reachable execution. This reduces the total fanout by exactly one while preserving correctness and obstruction-freedom. Since $A_{m-1}$ has at least one invisible reader, inherited from $A_m$, it follows that $E(m-1,N,S-1)$ holds.
\end{proof}

The next lemma applies Lemma~\ref{lem:lb_main} inductively to show that $S$ must be large when $m$ is. For intuition, a finite rooted tree with $q$ internal nodes whose branching factors sum to $T$ has $T-q+1$ leaves.

    
    
    \begin{lemma}
    \label{lem:inv}
        $E(m, N, S)$ implies $S - N + 1\geq m$.
    \end{lemma}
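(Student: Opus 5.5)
We argue by induction on $m$, applying Lemma~\ref{lem:lb_main} repeatedly. Suppose $E(m,N,S)$ holds. Applying Lemma~\ref{lem:lb_main} $m-1$ times gives $E(1, N, S-(m-1))$. Each application is legitimate, since the current number of values stays at least $2$ until the last step. The number of base registers $N$ is unchanged throughout; only the total fanout drops, by exactly one per step. The claim then reduces to the base case: if $E(1,N,S')$ holds, then $S' \geq N$. Substituting $S' = S-m+1$ gives $S-m+1 \geq N$, i.e.\ $S-N+1 \geq m$.

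For the base case we use the definition of total fanout. Each base register in the implementation $A_1$ witnessing $E(1,N,S')$ is part of an implementation, so it must be able to hold at least one value, namely its value in the initial configuration. Hence $b_i \geq 1$ for every $i$, and $S' = \sum_i b_i \geq N$.

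The point needing care is that a register's domain never becomes empty during the reductions. In the proof of Lemma~\ref{lem:lb_main}, the removed value $x$ is shown not to be stored in $t$ at the initial configuration $C_2$ of $A_{m-1}$. So $t$ still holds its initial value, a different value, and its domain stays nonempty. The number of registers $N$ therefore never shrinks, which is exactly why the count is $S-N+1$ rather than something weaker. Formally, I would phrase everything as an induction on $m$: base case $m=1$ as above; inductive step via Lemma~\ref{lem:lb_main}, giving $(S-1)-N+1 \geq m-1$.

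I expect no real obstacle beyond checking this bookkeeping. A minor subtlety is the case $m=1$ with $N=0$, which makes $S'=0$; the inequality $0 \geq 0$ still holds there.
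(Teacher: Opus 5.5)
Your proof is correct and is essentially the paper's argument: induction on $m$, with base case $S \geq N$ because every base register has fanout at least one, and an inductive step that reduces $E(k+1,N,S)$ to $E(k,N,S-1)$ via Lemma~\ref{lem:lb_main}. Your extra check that register domains stay nonempty is harmless but unnecessary. The base-case observation already applies to any valid implementation witnessing $E(1,N,S')$, since each register holds its initial value.
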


    \begin{proof}
        We proceed by induction on $m$. 
        For the base case where $m = 1$, every base register has fanout at least one, so $S \geq N$ and hence $S - N + 1 \geq 1 = m$. For the inductive step, assume the lemma holds for some $m = k$, where $k \geq 1$.
        By Lemma~\ref{lem:lb_main}, $E(k+1, N, S)$ implies $E(k, N, S-1)$ since $k+1 \geq 2$. By the inductive hypothesis $E(k, N, S-1)$ implies $(S - 1) - N + 1 \geq k$. Adding one to both sides gives $S - N + 1 \geq k + 1$, as required.
        
    \end{proof}

\begin{theorem}
\label{thm:lb_inv}
    Any obstruction-free implementation of an $m$-valued regular SW register from $b$-valued atomic base registers in which \textbf{some reader is invisible} uses at least $\lceil \frac{m-1}{b-1} \rceil$ base registers.
\end{theorem}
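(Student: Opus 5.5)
The theorem follows directly from Lemma~\ref{lem:inv} once we specialize to uniform base registers. Take any obstruction-free implementation of an $m$-valued regular SW register from $N$ atomic SW base registers, each $b$-valued, in which some reader is invisible. By Definition~\ref{def:e_pred}, this implementation witnesses $E(m, N, S)$, and since every base register has fanout $b$, its total fanout is $S = Nb$. Lemma~\ref{lem:inv} then gives $Nb - N + 1 \geq m$, that is, $N(b-1) \geq m-1$.

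Next, assume $b \geq 2$. A $1$-valued base register carries no information, so the statement is only meaningful for $b \geq 2$; I will note this explicitly. With $b-1>0$ we may divide to get $N \geq \frac{m-1}{b-1}$. Since $N$ is an integer, this yields $N \geq \lceil \frac{m-1}{b-1} \rceil$.

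The only point that needs care is the degenerate case $b = 1$. There, if $m \geq 2$, the inequality $N(b-1) \ge m-1$ reads $0 \geq m-1$, which is false, so no such implementation exists and the bound holds vacuously. If $m = 1$, the bound is $0/0$ and is not meaningful. I expect no real obstacle, because all the substantive work is already done in Lemmas~\ref{lem:lb_main} and~\ref{lem:inv}.
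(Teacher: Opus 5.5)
Your proposal is correct and follows the paper's proof: the implementation witnesses $E(m,N,Nb)$, Lemma~\ref{lem:inv} gives $N(b-1)\geq m-1$, and integrality of $N$ yields the ceiling. Your explicit treatment of $b=1$ is a harmless addition that the paper leaves implicit when it divides by $b-1$.
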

    
    
    \begin{proof}
Any such implementation using $N$ base registers witnesses $E(m, N, Nb)$.
By Lemma~\ref{lem:inv}, $Nb - N \geq m - 1$, that is, $N \geq \frac{m-1}{b-1}$. Since $N$ is an integer, $N \geq \lceil \frac{m-1}{b-1} \rceil$.
    \end{proof}

\begin{theorem}
\label{thm:lb_vis}
    Any obstruction-free implementation of an $m$-valued regular SW register from $b$-valued atomic base registers uses at least $\lceil \min(\frac{m-1}{b-1}, \, r+\frac{\log{m}}{\log{b}}) \rceil$ base registers, where $r$ is the number of readers.
\end{theorem}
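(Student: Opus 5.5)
We split into two cases according to whether the implementation has an invisible reader. If at least one reader is invisible, Theorem~\ref{thm:lb_inv} gives $N \geq \lceil \frac{m-1}{b-1} \rceil$. This is at least $\lceil \min(\frac{m-1}{b-1}, r+\frac{\log m}{\log b}) \rceil$, since $\lceil\cdot\rceil$ is monotone. So the remaining work is the case in which every reader is visible, where we aim to show $N \geq r + \lceil \frac{\log m}{\log b} \rceil$. This quantity is at least $\lceil r + \frac{\log m}{\log b}\rceil$ because $r$ is an integer, and it dominates the minimum.

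For the all-visible case, the first step is to show that every reader writes to some base register. By definition, a visible reader is one that is not invisible, so it writes to at least one base register in some execution. Since base registers are single-writer, each such register is owned by that reader. Distinct readers therefore own disjoint, nonempty sets of registers, which accounts for at least $r$ registers, none of them writable by the writer.

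The second step is to show that the writer owns at least $\lceil \frac{\log m}{\log b}\rceil$ registers. Consider the $m$ solo executions from the initial configuration in which the writer completes a single write of value $w$, for each $w$ in the domain, and then a reader runs a read solo to completion. By regularity, the read must return $w$. In each of these executions the readers' registers still hold their initial contents when the read begins. Hence the reader's starting state, together with the readers' registers, is identical across all $m$ executions, and its return value is a function of the contents of the writer-owned registers at the end of the write. If the writer owns $k$ registers, there are at most $b^k$ possible contents, so $b^k \geq m$, that is, $k \geq \frac{\log m}{\log b}$, and since $k$ is an integer, $k \geq \lceil \frac{\log m}{\log b} \rceil$. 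Adding the two disjoint groups gives $N \geq r + \lceil\frac{\log m}{\log b}\rceil$.

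The main subtlety is justifying the information-theoretic step. The reader may write to its own registers during its solo read, so its behavior depends on more than the writer's registers. However, it starts from identical local state and identical reader-owned register contents in all $m$ executions. Thus, by induction on its solo steps, its entire run, including what it writes to its own registers, is determined by the writer-owned contents. A second point to handle cleanly is the definition of ``visible'': a reader that never writes in any execution is invisible, so in the all-visible case each reader indeed writes, and single-writer ownership forces at least one dedicated register per reader.
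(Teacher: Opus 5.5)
Your proposal is correct and follows the same proof as the paper: the same case split, Theorem~\ref{thm:lb_inv} for the invisible-reader case, and, when every reader is visible, $r$ disjoint reader-owned registers plus $\lceil \frac{\log m}{\log b}\rceil$ writer-owned registers. Your solo-write-then-solo-read argument (readers' registers untouched by the write, the reader's run determined by the writer's registers, so $b^k \geq m$) makes explicit the information-theoretic step that the paper only asserts.
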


    
    \begin{proof}
    Let $A$ be any obstruction-free implementation of an $m$-valued regular SW register from $b$-valued atomic base registers. If $A$ has an invisible reader, then by Theorem~\ref{thm:lb_inv} it uses at least $\lceil \frac{m-1}{b-1} \rceil$ base registers. Otherwise every reader is visible, and since base registers are single-writer, each reader writes to a different base register, accounting for $r$ of them; the writer needs a further $\lceil \frac{\log{m}}{\log{b}} \rceil$ base registers to represent the $m$ possible values, for a total space usage of at least $\lceil r+\frac{\log{m}}{\log{b}} \rceil$. Putting these two cases together yields the desired lower bound.

    \end{proof}

\section{Space Upper Bound}
\label{sec:upper}

In Section~\ref{sec:lb}, we proved that every obstruction-free
implementation of an $m$-valued regular SW register from $b$-valued atomic
base registers requires at least
$\left\lceil\min\!\left(\frac{m-1}{b-1},\, r+\frac{\log m}{\log b}\right)\right\rceil$
base registers. We now show that this bound is asymptotically tight, even for
the stronger goal of implementing an \emph{atomic} register with a
\emph{wait-free} algorithm. The first term of the minimum is already matched by
known invisible-reader constructions: instantiating Vidyasankar's
regular-to-atomic transformation~\cite{VIDYASANKAR1991323} with the tree-based
regular register of Chaudhuri and Welch~\cite{chaudhuri1994bounds} yields a
wait-free atomic-from-atomic implementation using $\Theta(\frac{m}{b})$ base
registers. This section supplies an upper bound matching the second term. We give a wait-free
implementation of an $m$-valued atomic SW register from $b$-valued atomic SW
base registers in Figure~\ref{alg:atomic} that uses only
$\Theta\!\left(r+\frac{\log m}{\log b}\right)$ base registers. Running whichever of the two implementations is cheaper for
the given parameters then matches the lower bound with a space complexity of
$\Theta\!\left(\min\!\left(\frac{m}{b},\, r+\frac{\log m}{\log b}\right)\right)$,
improving on the previous best space bound of
$\Theta\!\left(\min\!\left(\frac{m}{b},\, r\frac{\log m}{\log b}\right)\right)$, obtained by combining the invisible-reader construction above with the visible-reader algorithms of Peterson and Chen and Wei~\cite{Peterson,chen2017step}.
The space
savings come at a modest cost in step complexity: in our implementation each
write completes in $\Theta\!\left(r+\frac{\log m}{\log b}\right)$ steps and each
read in $\Theta\!\left(\frac{\log^2 m}{\log^2 b}\right)$ steps.

At a high level, the algorithm combines the double buffering technique with helping: a reader first tries to scan the active global buffer directly, and if it fails this scan too many times, it returns the value that the writer delivered one piece at a time.
Between every two failed scan attempts on the global buffer, the reader is guaranteed to receive a piece of a value that is safe to return from the writer.

We first define notation and
conventions for the implementation in Section~\ref{sec:notation} and then give a detailed description of the algorithm in Section~\ref{sec:overview}. The remaining subsections prove that the algorithm is linearizable and wait-free with the desired bounds.

\begin{figure}
{\small
\begin{algorithmic}[1]
\Statex \textbf{Shared Variables:}
\Statex $L = \lceil \frac{\log m}{\log b} \rceil$: number of pieces per value
\Statex $G[0], G[1]$: arrays of $L$ registers
\Statex $V$: 1-bit pointer register (active buffer pointer)
\Statex For each reader $i \in \{1, \dots, r\}$:
\Statex \quad $Mailbox_i$: 1 register (holds 1 piece of the value)
\Statex \quad $ReadReq_i, ReadAck_i$: 1-bit handshake for new read operation
\Statex \quad $AttemptReq_i, AttemptAck_i$: 1-bit handshake for interruption tripwire
\Statex \quad $PieceAck_i, PieceReady_i$: 1-bit handshake for piece transfer

\Statex \textbf{Writer Local Variables:}
\Statex $localVal[1 \dots r]$: one $L$-piece value per reader
\Statex $idx[1 \dots r]$: integer $0 \dots L$ (next piece to send)
\Statex
\Statex \textbf{Initialization:}
\Statex $V \gets 0$, and both $G[0]$ and $G[1]$ contain the initial value
\Statex For each reader $i$: Set every handshake bit to 0, and $idx[i] \gets L$
\Statex
\makeatletter
\setcounter{ALG@line}{-1}
\Procedure{Write}{$value$}
    \State $s \gets 1 - \text{read}(V)$ \Comment{Inactive buffer}
    \State $\text{write}(G[s], value)$ \Comment{Write data}
    \State $\text{write}(V, s)$ \Comment{Commit active buffer}
    
    \For{$k = 1$ \textbf{to} $r$} \Comment{Check all readers}
        \State $rReq \gets \text{read}(ReadReq_k)$
        \If{$rReq \neq ReadAck_k$} \Comment{Check if a new read has started}
            \State $localVal[k] \gets value$ \Comment{Fix this committed value for reader $k$}
            \State $idx[k] \gets 0$ \Comment{Reset next-piece index}
            \State $\text{write}(PieceReady_k, \text{read}(PieceAck_k))$ \Comment{Clear stale mailbox signal}
            \State $\text{write}(ReadAck_k, rReq)$ \Comment{Acknowledge new read}

        \EndIf
        
        \If{$\text{read}(PieceAck_k) = PieceReady_k$ \textbf{and} $idx[k] < L$} \Comment{Is reader ready?}
            \State $\text{write}(Mailbox_k, localVal[k][idx[k]])$ \Comment{Send piece}
            \State $\text{write}(PieceReady_k, 1 - PieceReady_k)$ \Comment{Signal piece ready}
            \State $idx[k] \gets idx[k] + 1$
        \EndIf
        
        \State $aReq \gets \text{read}(AttemptReq_k)$
        \State $\text{write}(AttemptAck_k, aReq)$ \Comment{Acknowledge tripwire}
    \EndFor
\EndProcedure
\Statex
\setcounter{ALG@line}{-1}

\Procedure{Read$_i$}{}
    \State $myReq \gets 1 - \text{read}(ReadAck_i)$
    \State $\text{write}(ReadReq_i, myReq)$ \Comment{Announce new read}
    \State $pieces \gets []$ \Comment{Empty local list}
    \While{$\text{length}(pieces) < L$}
        \State $aReq \gets 1 - \text{read}(AttemptAck_i)$ \Comment{Arm tripwire}
        \State $\text{write}(AttemptReq_i, aReq)$
        
        \State $s \gets \text{read}(V)$ \Comment{Active buffer}
        \State $val \gets \text{read}(G[s])$ \Comment{Attempt global read}
        
        \If{$\text{read}(AttemptAck_i) \neq aReq$} \Comment{Did writer interrupt?}
            \State \Return $val$ \Comment{No, safe to return}
        \EndIf
        
        \If{$\text{read}(ReadAck_i) = myReq$} \Comment{Wait for writer to sync}
            \If{$\text{read}(PieceReady_i) \neq PieceAck_i$} \Comment{Is piece ready?}
                \State $pieces\text{.append}(\text{read}(Mailbox_i))$ \Comment{Grab piece}
                \State $\text{write}(PieceAck_i, \text{read}(PieceReady_i))$ \Comment{Acknowledge piece}
            \EndIf
        \EndIf
    \EndWhile
    \State \Return $pieces$ \Comment{All pieces received}
\EndProcedure
\Statex
\makeatother
\end{algorithmic}}
\caption{
Wait-Free atomic $m$-valued SW register from atomic $b$-valued SW registers.}
\label{alg:atomic}
\end{figure}

\subsection{Definitions and Initialization}
\label{sec:notation}

For the implementation in Figure~\ref{alg:atomic}, a register value consists of $L$ pieces, where
$
L=\left\lceil \frac{\log m}{\log b} \right\rceil.
$
The global buffers $G[0]$ and $G[1]$ are arrays of $L$ atomic registers. A scan of $G[s]$ means reading from all $L$ registers of $G[s]$ in some fixed order. A write of $G[s]$ means writing to all $L$ registers of $G[s]$ in some fixed order. The order of the pieces is irrelevant for the proof.
A reader returning an $L$-piece sequence returns the value encoded by that sequence.
The pointer register $V \in \{0,1\}$ identifies the active buffer. 

For each reader $i$, there are registers \ReadReq{i}, \ReadAck{i}, \AttemptReq{i}, \AttemptAck{i}, \PieceAck{i}, \PieceReady{i}, and \Mailbox{i}. The writer alone writes to \ReadAck{i}, \AttemptAck{i}, \PieceReady{i}, and \Mailbox{i}. Reader $i$ alone writes to \ReadReq{i}, \AttemptReq{i}, and \PieceAck{i}. 

The writer also maintains local variables \localVal{i} and \idx{i}. For a read request, \localVal{i} stores the fixed $L$-piece value being delivered to reader $i$, and \idx{i} is the index of the next piece to send. We call this ordered delivery a \emph{stream}. A value written to \Mailbox{i} is a \emph{mailbox piece}; the condition $\PieceReady{i}\neq\PieceAck{i}$ is a \emph{mailbox signal} indicating that an unacknowledged piece is ready, and such a signal is \emph{stale} if it belongs to an earlier stream.

The initial configuration satisfies $\ReadReq{i}=\ReadAck{i}$, $\AttemptReq{i}=\AttemptAck{i}$, $\PieceAck{i}=\PieceReady{i}$, and $\idx{i}=L$ for every reader $i$, and both global buffers contain the initial value of the simulated register. The condition $\idx{i}=L$ ensures that the writer sends no mailbox piece to reader $i$ before it has captured a value for some read request.

For a multi-word write operation $W$, let $c(W)$ denote the configuration immediately after $W$'s write to $V$. The write to $V$ is called the \emph{commit step} of $W$. After $c(W)$, the writer scans the readers once. For reader $i$, the part of this scan that inspects and updates the registers associated with $i$ is called the \emph{service block} of $W$ for reader $i$. 
A \Write{} operation does not return until after it has completed all of its service blocks. Since the writer process invokes at most one operation at a time, no later \Write{} operation begins before the previous \Write{} operation has completed all of its service blocks. During the service blocks of a write $W$, the \emph{current committed value} is the value of $W$, since $W$ has already executed its commit step and no later write has begun. When $W$ acknowledges a new read request from reader $i$, it fixes this value for the read by storing it in \localVal{i}.



\subsection{Algorithm Overview}
\label{sec:overview}

The algorithm combines two standard ideas: double buffering and per-reader
assistance. Double buffering lets the writer prepare a new value without
overwriting the current value.
Per-reader assistance allows reads to eventually return a correct value after enough failed attempts at reading the active buffer.
At any
configuration, the active buffer is indicated by the bit $V$. To write a new value, the writer fills the inactive buffer
completely and only then changes $V$ to point to that buffer; the configuration immediately after this update to $V$ is the linearization point of the write. This ensures that the active
buffer $G[V]$ always contains the value of the most recently linearized write.


After changing $V$, the writer executes one service block for each reader. These service blocks
allow slow readers to make progress even if their direct buffer scans keep
failing. For each reader $i$, the writer first checks whether reader $i$ has
announced a new read by comparing its request bit $rReq :=\ReadReq{i}$ with \ReadAck{i}. If they differ,
the writer treats this as a new request: it fixes the current committed value
as the value that will be streamed to reader $i$, stores that value locally as
\localVal{i}, resets \idx{i} to the first piece, and clears any stale mailbox signal by making \PieceReady{i} agree with \PieceAck{i}. This clearing step
ensures that the reader will not confuse a piece from an old mailbox stream
with a piece from the new one. The writer then acknowledges the request by
copying the request bit into \ReadAck{i}. After handling the read request, the
writer may send one mailbox piece: if \PieceAck{i}=\PieceReady{i}, meaning there is no unacknowledged mailbox
piece waiting for the reader, and there are still pieces left to send, the
writer writes the next piece into \Mailbox{i}, flips \PieceReady{i} to announce that a new piece is
ready, and advances \idx{i}. Finally, the writer triggers the reader's
tripwire by copying \AttemptReq{i} into \AttemptAck{i}.

A read by reader $i$ begins by creating a fresh read request. The reader reads
its current acknowledgement bit \ReadAck{i}, chooses the opposite bit as its
new request bit $myReq$, writes $myReq$ to \ReadReq{i}, and starts with an
empty list of mailbox pieces. The reader then first tries to read directly
from the active global buffer. This is the optimistic, fast-path case. The
difficulty is that a reader scans the buffer piece by piece, so the writer
might switch buffers and later return to overwrite the same buffer while the
reader is still scanning it. To detect this kind of interference, reader $i$
uses a small tripwire formed by \AttemptReq{i} and \AttemptAck{i}. Before scanning, the reader reads \AttemptAck{i}, chooses the opposite bit as
its fresh attempt bit $aReq$, and writes $aReq$ to \AttemptReq{i}. If the
writer makes enough progress to possibly endanger the scan, then during its 
service block of reader $i$ it copies this attempt bit into \AttemptAck{i}.
Thus, after the scan, if the reader sees that the tripwire was not triggered,
it knows the scan was safe and can return the value it read from the global
buffer.

If the tripwire is triggered, the reader does not trust the global scan.
Instead, it relies on assistance from the writer. Each reader has a small
mailbox, \Mailbox{i}, through which the writer can deliver one piece of a
value at a time. The reader first checks whether its read request has been
acknowledged, by testing whether \ReadAck{i} equals its request bit $myReq$.
If the request has not yet been acknowledged, then the writer has not yet
fixed a mailbox stream for this read, so the reader simply tries another
iteration. If the request has been acknowledged, then the writer has fixed the
current committed value for this read and stored it locally as \localVal{i}.
From that point on, the writer streams the pieces of that fixed value through
the reader's mailbox. The writer and reader use a one-bit handshake,
\PieceReady{i} and \PieceAck{i}, to ensure that a new mailbox piece is not
sent until the previous one has been received. When the two bits differ, the
reader knows that a new piece is ready: it reads the piece from \Mailbox{i},
appends it to its local list, and then acknowledges the piece by copying
\PieceReady{i} into \PieceAck{i}. This acknowledgement allows the writer to
send the next piece later. The handshake prevents pieces from being skipped,
duplicated, or overwritten before the reader consumes them.

Thus a read has two possible outcomes. If it experiences little or no
interference, it succeeds quickly by returning a direct scan of the active
buffer $G[V]$. If it repeatedly overlaps writes, then the writer's service
blocks gradually deliver the fixed value \localVal{i} through the mailbox.
In either case, the reader eventually obtains a complete value: either
directly from a stable global buffer, or piece by piece from the writer's
fixed mailbox stream.


\subsection{Linearization Points}
\label{sec:linpoint}

\begin{itemize}
    \item \textbf{Writes:} 
    A write operation $W$ that executes its commit step is linearized at configuration $c(W)$, the configuration immediately after its write to $V$ (i.e., at line 3 of $W$).

    \item \textbf{Fast-Path Reads:} 
    A read operation $R$ that returns through the fast path is linearized at the configuration immediately after the read of $V$ in the successful fast-path iteration (i.e., at line 7 of $R$ if $R$ returns at line 10).

    \item \textbf{Slow-Path Reads:} For a read operation $R$ by reader $i$, define $myReq := 1-\ReadAck{i}$, where \ReadAck{i} is the value read by $R$ at its first step. The read then writes $\ReadReq{i}:=myReq$. If $R$ returns through the slow path, then $R$ is linearized at the configuration immediately after the first writer step during $R$ that writes $\ReadAck{i}:=myReq$ (i.e., at the first writer's line 10 during $R$ that writes $\ReadAck{i}:=myReq$). This step exists for every slow-path read, as shown below.
\end{itemize}


Pending operations are handled as follows. A pending write operation that has already executed its commit step is included and linearized at the corresponding configuration $c(W)$. A pending write that has not executed its commit step may be discarded. Pending reads may be discarded.


\subsection{Proof Overview}
\label{sec:intuition}

The main correctness idea is that a reader either obtains a clean direct scan
of the active global buffer, or else obtains all pieces of one fixed
writer value through its mailbox.

To prove linearizability, it suffices to prove that every completed read returns the value of the latest write linearized before the read's assigned linearization point.
The first key invariant is that, at every configuration, the active buffer
$G[V]$ stores the value of the latest write that has linearized. This
invariant holds because the writer first fills the inactive buffer with the
new value and only then changes $V$ to point to that buffer. Therefore, before
the update to $V$, the old active buffer still stores the latest linearized
write value; after the update to $V$, the new active buffer stores the newly
linearized write value.

For fast-path reads, namely reads whose tripwire is not triggered, the only
danger is that the reader might scan a buffer while the writer is overwriting
that same buffer. The tripwire prevents such a
read from returning. Before scanning, the reader writes a fresh value to
\AttemptReq{i}. If the writer performs enough work to switch buffers and come
back to overwrite the scanned buffer, then the writer must execute a service block for reader $i$ in between. During that service block, it copies
\AttemptReq{i} into \AttemptAck{i}. Hence, if the reader finishes its scan and
still sees that \AttemptAck{i} has not been updated to the fresh tripwire
value, then the writer could not have overwritten the scanned buffer during
the scan. The reader can therefore safely return the scanned value.

For slow-path reads, namely reads whose tripwire is triggered until they
collect all $L$ mailbox pieces, the writer explicitly captures a stable value
for the reader. When the writer acknowledges reader $i$'s read request by writing to
\ReadAck{i}, it has already stored the current committed value in \localVal{i}
and reset \idx{i} to the beginning of that value. After this acknowledgement,
the reader's request and acknowledgement bits remain equal until the read
returns, so the writer does not reset \localVal{i} or \idx{i} for that reader
during the read. The mailbox handshake then ensures that the reader receives
the pieces of this fixed \localVal{i} in order. The writer does not send a
new piece until the previous one has been acknowledged, so no piece is
duplicated, skipped, or overwritten before the reader consumes it. Therefore,
if the reader returns through the slow path, it returns exactly the fixed
value captured by the writer at the acknowledgement configuration.

The wait-freedom argument is based on bounding how many failed fast-path
attempts a reader can suffer. Each writer operation has a bounded amount of
work: it writes one global buffer and then scans the fixed set of readers
once. For reads, each failed loop iteration either appends one mailbox piece
or appends nothing. A productive failed iteration appends one piece, so there
can be at most $L$ such iterations. The key point is that two consecutive
failed iterations cannot both be unproductive. If a writer service block causes the second failure, then immediately before the reader checks its mailbox in the second failed iteration,
either a piece was already ready or the writer has just made one ready,
unless all $L$ pieces have already been received. Thus the second failed
iteration must append a piece. Therefore the number of unproductive failed
iterations is at most one more than the number of productive failed
iterations. Since there are at most $L$ productive failed iterations, every read completes
after a bounded number of iterations. More concretely, each read executes at
most $2L+1$ iterations, each costing $O(L)$ reader steps, so every read
completes in $\Theta(L^2)$ reader steps. Similarly, writes complete in $\Theta(L+r)$ writer steps:
the writer fills one $L$-piece buffer and then services $r$ readers,
each service costing $\Theta(1)$ steps.
The $\Theta(r + L)$ bound on space usage can be shown by examination of the pseudo-code.



\subsection{Fast-Path Correctness}
\label{sec:buffer-safety}

\begin{invariant}[Active-buffer invariant]
At every configuration, the active buffer $G[V]$ contains the value of the latest write whose commit step has occurred (i.e., the value of the most recently linearized write
operation), or the initial value if no commit step has occurred.
\end{invariant}

\begin{proof}
Initially, $G[V]$ contains the initial value. Consider a write operation $W$. The writer reads $V$, chooses the other buffer $s:=1-V$, writes to all pieces of $G[s]$, and only then writes $V:=s$. Before $c(W)$, the active buffer is the old buffer $G[V]$ and is not modified by $W$. The step immediately preceding $c(W)$ changes $V$ to $s$, and at configuration $c(W)$, $G[s]$ has already been completely written to with the value of $W$. Therefore, at configuration $c(W)$, the active buffer $G[s]$ contains the value of $W$. While $G[s]$ remains active, the writer can write only to the inactive buffer $G[1-s]$. Thus the active buffer cannot be modified until a later write first commits the other buffer. The invariant follows by induction over commit steps.
\end{proof}

\begin{lemma}[Tripwire lemma]
Consider one loop iteration of \ReadOp{i}. Suppose reader $i$ writes $\AttemptReq{i}:=aReq$, then reads $V=s$, scans $G[s]$, and finally observes $\AttemptAck{i}\neq aReq$. Then the writer writes to no piece of $G[s]$ between the reader's read of $V$ and the end of the scan of $G[s]$.
\end{lemma}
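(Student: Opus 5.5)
We argue by contradiction, using the fact that the writer only ever writes to the inactive buffer. Suppose the writer writes some piece of $G[s]$ at a step $\sigma$ that lies after the reader's read of $V=s$ and before the end of the scan of $G[s]$. Let $W$ be the write operation containing $\sigma$. By line~1 of \Write{}, $W$ read $V$ and chose $s = 1-V$, so when $W$ read $V$, the value $1-s$ was stored there. The step of $W$ that reads $V$ is the only place where $W$ determines $s$.

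We then split on whether $W$ read $V$ before or after the reader's read of $V$. Only the writer writes $V$, and $W$ does not write $V$ before its write to $G$. Hence, if $W$ read $V$ after the reader's read of $V=s$ and before $\sigma$, then $V$ must have changed from $s$ to $1-s$ in between. That change is the commit step of some earlier write $W'$, which is distinct from $W$ and completes before $W$ starts. If instead $W$ read $V$ before the reader read $V=s$, then $V$ was $1-s$ at $W$'s read of $V$ and $s$ at the reader's read. Because $W$ has not yet committed, this is impossible: no write other than $W$ runs in between, since writes are sequential. So this case is excluded.

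Both remaining situations yield a complete write $W'$ that commits after the reader's read of $V$ and before $W$ begins. After its commit step, $W'$ executes all of its service blocks before returning. In particular, it executes the service block for reader $i$, where it reads $\AttemptReq{i}$ and writes the value read into $\AttemptAck{i}$. This service block lies strictly after the reader's write $\AttemptReq{i}:=aReq$, because it comes after the reader's read of $V$, which follows that write. It also lies before $\sigma$, and therefore before the reader's final read of $\AttemptAck{i}$. During this interval reader $i$ does not write $\AttemptReq{i}$ again, since within one iteration it writes it only once, before reading $V$. So the writer reads $aReq$ and writes $\AttemptAck{i}:=aReq$.

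To finish, we must show that $\AttemptAck{i}$ still equals $aReq$ when the reader reads it. Any later writer write to $\AttemptAck{i}$ before that read is produced by a later service block for $i$. Such a block copies $\AttemptReq{i}$, which is still $aReq$, so it writes $aReq$ again. Hence the reader observes $\AttemptAck{i}=aReq$, which contradicts the hypothesis. The main obstacle is the careful ordering argument in the case split, which must establish that a full service block of a completed write $W'$ lies between the reader's write of $\AttemptReq{i}$ and its final read of $\AttemptAck{i}$. The key point is to use that $W'$ finishes all of its service blocks before $W$ starts, which follows because the writer runs one operation at a time.
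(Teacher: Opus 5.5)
Your proof is correct and follows essentially the same route as the paper: your $W'$ is the paper's $W_1$ (the write that commits $V:=1-s$ after the reader's read of $V$), and its complete service block for reader $i$, which writes $\AttemptAck{i}:=aReq$, must precede the write $W$ (the paper's $W_2$) that touches $G[s]$. Your case split on when $W$ read $V$ simply re-derives inline the fact the paper cites, that the writer never writes to the active buffer. Only one case survives that split, so ``both remaining situations'' should read ``the remaining situation.''
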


\begin{proof}
Let $C_V$ be the configuration immediately after the reader reads $V=s$, and let $C_E$ be the configuration immediately after the reader completes its scan of $G[s]$. Suppose, for contradiction, that the writer writes to some piece of $G[s]$ between $C_V$ and $C_E$. At configuration $C_V$, buffer $G[s]$ is active. The writer never writes to the active buffer. Therefore, before the writer can write to any piece of $G[s]$, it must first make $G[1-s]$ active by writing $V:=1-s$. Let $W_1$ be the write operation that first writes $V:=1-s$ after $C_V$.

The write operation $W_1$ cannot itself write to $G[s]$ after publishing $G[1-s]$, because its buffer-writing phase precedes its commit step. Hence the first possible write to $G[s]$ after $C_V$ occurs in a later write operation $W_2$. Between the commit step of $W_1$ and the beginning of $W_2$, the single writer completes the reader scan of $W_1$. In the service block for reader $i$, the writer reads \AttemptReq{i} and then writes the value read into \AttemptAck{i}. Reader $i$ wrote $\AttemptReq{i}:=aReq$ before $C_V$ and does not change \AttemptReq{i} again until after the final tripwire check of the same iteration. Since the service block of $W_1$ occurs after $C_V$ and before any write to $G[s]$ by $W_2$, the writer reads $\AttemptReq{i}=aReq$ and writes $\AttemptAck{i}:=aReq$ before the alleged write to $G[s]$ between $C_V$ and $C_E$. Moreover, any later service block before the reader's final tripwire check also reads $\AttemptReq{i}=aReq$ and writes $\AttemptAck{i}:=aReq$, because reader $i$ has not changed \AttemptReq{i} during this iteration.

Therefore, immediately before the reader performs its final tripwire check, \AttemptAck{i} equals $aReq$. This contradicts the assumption that the reader observes $\AttemptAck{i}\neq aReq$. Hence no piece of $G[s]$ is written between $C_V$ and $C_E$.
\end{proof}

\begin{lemma}[Fast-path correctness]
If a read returns through the fast path, then it returns the value of the latest write linearized before its fast-path linearization point.
\end{lemma}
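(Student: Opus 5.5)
We combine the Active-buffer invariant with the Tripwire lemma. Fix a read $R$ by reader $i$ that returns at line 10 in some iteration. In that iteration the reader writes $\AttemptReq{i}:=aReq$, reads $V=s$ (the linearization point $C_V$ is the configuration immediately after this read), scans $G[s]$, and then observes $\AttemptAck{i}\neq aReq$. The goal is to show that the value $val$ returned equals the contents of $G[s]$ at $C_V$. We then show that these contents are the value of the latest write linearized at or before $C_V$.

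\textbf{Step 1.} At $C_V$ we have $V=s$. By the Active-buffer invariant, every piece of $G[s]$ at $C_V$ together encodes the value of the most recently linearized write, or the initial value if no commit step has occurred yet. Write operations are linearized exactly at their commit steps. A write committed by a step strictly before $C_V$ is linearized before $C_V$. No write can be linearized at $C_V$ itself, since $C_V$ follows a reader step. So this value is precisely the value of the latest write linearized before $R$'s linearization point.

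\textbf{Step 2.} By the Tripwire lemma, the writer writes to no piece of $G[s]$ between $C_V$ and the end of the scan $C_E$. Only the writer writes to $G[s]$, so every register of $G[s]$ holds the same contents throughout the interval from $C_V$ to $C_E$. The scan reads each of the $L$ registers at some point inside this interval, so each read returns the corresponding piece as it was at $C_V$. Hence $val$ is exactly the $L$-piece encoding stored in $G[s]$ at $C_V$. Combined with Step 1, this proves the claim.

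\textbf{Remaining checks.} $C_V$ lies within $R$'s execution interval, so it is a valid linearization point. The main subtlety is ensuring that the Tripwire lemma's hypotheses match the returning iteration exactly. The $aReq$ compared at line 9 must be the one written at line 5 of the same iteration, and the $V$ read must be the one whose configuration we chose as the linearization point. Both hold by inspection of lines 5–10. A second subtlety is the boundary case in which a commit step of the writer might occur concurrently with the scan. This does not matter: such a commit only changes $V$, not $G[s]$, and it occurs after $C_V$, so it is linearized after $R$.
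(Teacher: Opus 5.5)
Your proposal is correct and follows the paper's proof essentially step for step: the active-buffer invariant at $C_V$ identifies the contents of $G[s]$ with the value of the latest linearized write, and the tripwire lemma shows that the scan reads exactly those contents. Your extra checks, namely that no write is linearized at $C_V$ itself and that the tripwire lemma's hypotheses match the returning iteration, are sound elaborations of points the paper leaves implicit.
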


\begin{proof}
Let $C_V$ be the configuration immediately after the successful fast-path iteration reads $V=s$. By the active-buffer invariant, $G[s]$ contains the value of the most recently linearized write at configuration $C_V$, or the initial value if no write has committed. By the tripwire lemma, no piece of $G[s]$ is modified between the read of $V$ and the end of the scan of $G[s]$. Therefore the scan returns exactly the contents of $G[s]$ at configuration $C_V$. The read is linearized at configuration $C_V$. Hence it returns the value of the latest write linearized before its linearization point.
\end{proof}


\subsection{Slow-Path Correctness}

Let $R$ be a read operation by reader $i$, so $myReq$ is the bit written by $R$ to \ReadReq{i} at its start.

\begin{lemma}[Acknowledgement occurs during the read]
If $R$ observes $\ReadAck{i}=myReq$ after its initial read of \ReadAck{i}, then the writer executed a step $\ReadAck{i}:=myReq$ after that initial read and before the observation.
\end{lemma}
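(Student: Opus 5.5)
The argument is a short contradiction built on the fact that \ReadAck{i} is a single-writer bit owned by the writer. Let $C_a$ be the configuration immediately after $R$'s initial read of \ReadAck{i}, at which $R$ read the value $1-myReq$ (by definition of $myReq$). Let $C_o$ be the configuration immediately before the later read at which $R$ observes $\ReadAck{i}=myReq$. Since base registers are atomic and only the writer writes \ReadAck{i}, the value of \ReadAck{i} can change between $C_a$ and $C_o$ only through a writer step on line 10. Since \ReadAck{i} holds $1-myReq$ at $C_a$ and $myReq$ at $C_o$, some step in the interval between them must change it. A change from $1-myReq$ to $myReq$ can only be caused by a write of the bit $myReq$. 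Therefore the first writer step after $C_a$ that modifies \ReadAck{i} writes $\ReadAck{i}:=myReq$, and it lies strictly after the initial read and before the observation, which is exactly what the lemma asserts.

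I would write this up by first stating the single-writer fact: the writer alone writes \ReadAck{i}, as given in Section~\ref{sec:notation}, and it does so only at line 10. I would then pin down the values at $C_a$ and $C_o$ from the definition of $myReq$ and the hypothesis. Finally, I would argue that a bit register holding different values at two configurations must have been written in between, and that the first write altering the value writes the new bit. That last point is a one-line induction over the steps in the interval.

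The only subtle point is not to overclaim. One might want to say that this writer step reads $\ReadReq{i}=myReq$, i.e.\ that it genuinely acknowledges $R$'s request rather than copying a stale request bit. That stronger claim is not needed for the lemma as stated, which concerns only the existence of the step. If it were required later, for instance for the slow-path linearization point, I would establish it separately. The argument would run as follows. Before $R$ writes $\ReadReq{i}:=myReq$, we have $\ReadReq{i}=1-myReq$, because the previous read of reader $i$ ended with the request and acknowledgement bits equal, and initially $\ReadReq{i}=\ReadAck{i}$. Hence any line-10 write during that window copies $1-myReq$. So the first write of $myReq$ after $C_a$ must come from a service block whose line 4 read followed $R$'s write to \ReadReq{i}. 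For the present lemma, the elementary value-change argument suffices.
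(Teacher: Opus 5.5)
Your proof is correct and is the same as the paper's: right after $R$'s initial read, $\ReadAck{i}\neq myReq$, only the writer writes $\ReadAck{i}$, so a writer step $\ReadAck{i}:=myReq$ must occur before $R$ later observes $\ReadAck{i}=myReq$.

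Your optional side remark is false, though. A previous read by reader $i$ can return on the fast path before the writer acknowledges its request, and then $\ReadReq{i}$ already holds the stale bit $myReq$ before $R$ writes it. In that case the writer can copy this stale bit into $\ReadAck{i}$ before $R$'s own write to $\ReadReq{i}$, contrary to your claim that any earlier copy writes $1-myReq$. The paper does not rule this case out; it handles it explicitly in the no-stream-reset lemma, in the case where $A_R$ precedes $R$'s write to $\ReadReq{i}$.
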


\begin{proof}
At the start of $R$, reader $i$ reads the current value of \ReadAck{i} and sets $myReq$ to the opposite bit. Thus immediately after this read, $\ReadAck{i}\neq myReq$. Only the writer writes \ReadAck{i}. Therefore, if reader $i$ later observes $\ReadAck{i}=myReq$, the writer must have written $\ReadAck{i}:=myReq$ after the read of \ReadAck{i} at the beginning of $R$ and before that later observation.
\end{proof}

For a slow-path read $R$, let $A_R$ be the configuration immediately after the first writer step in the execution interval of $R$ that writes $\ReadAck{i}:=myReq$ (i.e., the first writer's line 10 during $R$ that writes $\ReadAck{i}:=myReq$).

\begin{lemma}[Stream initialization]
Let $R$ be a read by reader $i$ that returns through the slow path. Earlier in the same read-request branch, before $A_R$, the writer has set \localVal{i} to the value of the write operation whose service block contains $A_R$, set $\idx{i}:=0$, and flushed the piece handshake by writing $\PieceReady{i}:=\PieceAck{i}$.
\end{lemma}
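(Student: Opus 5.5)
The plan is a direct inspection of the writer's code around the step that produces $A_R$, combined with the fact that the writer is sequential and only it writes \ReadAck{i}, \localVal{i}, \idx{i}, and \PieceReady{i}.

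\textbf{Locating the step.} By definition, $A_R$ is the configuration immediately after a writer step $\ReadAck{i}:=myReq$, and this step occurs within the execution interval of $R$. Inspecting Figure~\ref{alg:atomic}, the only writer instruction that writes \ReadAck{i} is line 10. Line 10 lies inside the body of the test at line 6, $rReq \neq \ReadAck{i}$, in the service block for reader $i$ of some write operation $W$. So the step producing $A_R$ is line 10 of $W$'s service block for $i$, and that block executed lines 7--9 just before it.

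\textbf{Identifying the value written.} The value written at line 10 is the local $rReq$ read at line 5 of the same block. Since that write produced $\ReadAck{i}=myReq$, we have $rReq=myReq$, so this block is precisely the branch handling $R$'s request. In lines 7--9 of this block the writer:
\begin{itemize}
\item sets $\localVal{i}\gets value$, where $value$ is the argument of $W$;
\item sets $\idx{i}\gets 0$;
\item reads \PieceAck{i} and writes that value into \PieceReady{i}.
\end{itemize}
All three steps precede line 10, hence precede $A_R$. Also, $W$ is exactly the write whose service block contains $A_R$, and by the conventions of Section~\ref{sec:notation} its value is the current committed value.

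\textbf{The flush step.} The one subtle point is the claim that the writer wrote $\PieceReady{i}:=\PieceAck{i}$. The code only ensures that it writes the value of \PieceAck{i} read at line 9. I would state the conclusion exactly in that form. I would also note that reader $i$ writes \PieceAck{i} only at line 15, guarded by $\ReadAck{i}=myReq$. By the acknowledgement lemma, within $R$ this guard cannot hold before the step producing $A_R$. Between $R$'s first step and $A_R$, reader $i$ performs no piece acknowledgement, and before $R$ the previous read of $i$ has returned. Therefore \PieceAck{i} is unchanged between the writer's read of it and $A_R$, so $\PieceReady{i}=\PieceAck{i}$ holds at $A_R$.

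\textbf{Expected difficulty.} Everything above is routine code inspection. The only step needing care is this last argument: that reader $i$ does not modify \PieceAck{i} within that window.
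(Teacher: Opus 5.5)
Your main argument is the paper's proof. Line 10 is the only writer step that writes \ReadAck{i}, it lies inside the read-request branch, and lines 7--9 precede it in program order. That is all this lemma needs. The paper reads ``flushed the piece handshake by writing $\PieceReady{i}:=\PieceAck{i}$'' as the program-order fact that the writer copies the value it just read from \PieceAck{i}. It defers the question of whether \PieceAck{i} can change in between to the alternating-bit delivery lemma.

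Your extra paragraph, which strengthens this to $\PieceReady{i}=\PieceAck{i}$ at $A_R$, has a hole.

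\textbf{The window can start before $R$ is invoked.} The service block containing $A_R$ may read \ReadReq{i} and \PieceAck{i} before $R$ exists. Consider this schedule:
\begin{enumerate}
\item An earlier read $R'$ of reader $i$ writes request bit $q$.
\item The writer's block reads $\ReadReq{i}=q$, then reads \PieceAck{i}, and pauses.
\item $R'$ returns through the fast path, since its tripwire is untouched.
\item $R$ starts, reads $\ReadAck{i}=1-q$, and picks $myReq=q$.
\item Only then does the writer finish lines 9--10, producing $A_R$.
\end{enumerate}
So the window between the writer's read of \PieceAck{i} and $A_R$ can overlap $R'$.

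\textbf{Why your two facts do not cover it.} You show that $R$ cannot pass its guard before $A_R$, and that $R'$ has returned by the time $R$ is invoked. Neither rules out $R'$ writing \PieceAck{i} inside that window. Moreover, $R'$'s guard compares \ReadAck{i} with its own request bit, not with $myReq$, so the acknowledgement lemma for $R$ does not constrain it.

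\textbf{The missing step.} The paper supplies it in the alternating-bit delivery lemma. Let $C_t$ be the configuration right after the block reads \ReadReq{i}. Then $\ReadAck{i}\neq myReq$ throughout $[C_t,A_R]$. Every read of reader $i$ that could write \PieceAck{i} in that interval has request bit $myReq$: if it wrote its bit before $C_t$, that bit is exactly what the block read; if it writes it later, it must write $myReq$, because \ReadReq{i} stays $myReq$ there. Hence none of these reads can pass its first guard in the interval, so none can write \PieceAck{i}.

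Either add that argument, or drop the strengthening, which this lemma does not require.
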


\begin{proof}
The writer writes to \ReadAck{i} only inside the read-request branch of the service block for reader $i$. In that branch, before writing to \ReadAck{i}, the writer executes the assignments that set \localVal{i} to the value of the write whose service block is being executed, set $\idx{i}:=0$, and write $\PieceReady{i}:=\PieceAck{i}$. Therefore the claim follows directly from the writer's program order.
\end{proof}

\begin{lemma}[No stream reset during the read]
After configuration $A_R$ and before $R$ returns, the writer does not reset \localVal{i} or \idx{i} for reader $i$.
\end{lemma}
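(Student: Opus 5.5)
The plan is to observe that the writer resets \localVal{i} and \idx{i} only inside the read-request branch of its service block for reader $i$. That branch is entered only when the writer reads $rReq=\ReadReq{i}$ and finds $rReq\neq\ReadAck{i}$. So it suffices to show that, from $A_R$ until $R$ returns, every read of \ReadReq{i} by the writer returns a value equal to the current \ReadAck{i}. Here the writer's comparison uses its own last-written value of \ReadAck{i}, which it knows locally since it is the sole writer of that register.

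First I would pin down the value of \ReadReq{i}. Reader $i$ writes \ReadReq{i} only at the second line of a \ReadOp{i} operation, and a reader runs one operation at a time. Hence from $R$'s write $\ReadReq{i}:=myReq$ until $R$ returns, \ReadReq{i} stays equal to $myReq$. The configuration $A_R$ lies in $R$'s execution interval. It also lies after $R$'s write to \ReadReq{i}, because the service block that produced $A_R$ read $rReq=myReq$ before writing it to \ReadAck{i}, and before $R$'s write \ReadReq{i} held $1-myReq$. Note that $\ReadReq{i}=\ReadAck{i}$ at the start of a read is maintained inductively, just as in the initial configuration. So throughout the interval from $A_R$ to the return of $R$, $\ReadReq{i}=myReq$.

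Next I would pin down \ReadAck{i}. At $A_R$ it equals $myReq$. The writer writes \ReadAck{i} only at line 10, and there it writes the value $rReq$ it just read from \ReadReq{i}. By the previous step that value is $myReq$ in the whole interval. So \ReadAck{i} stays $myReq$, and any test $rReq\neq\ReadAck{i}$ during the interval reads $myReq$ on both sides and fails. One subtle case is a service block whose read of \ReadReq{i} happened before $A_R$ but whose branch actions occur after it. The only such block is the one containing $A_R$ itself, since the reset assignments in that block precede line 10 by program order. Any later block reads \ReadReq{i} after $A_R$.

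The main obstacle is the interleaving bookkeeping: showing that no service block straddles $A_R$ in a way that lets a stale $rReq$ read trigger a reset after $A_R$. I would handle it using the single-writer sequential program order within a block, together with the fact that every later block reads \ReadReq{i} after $A_R$. This gives the conclusion that no reset of \localVal{i} or \idx{i} occurs after $A_R$ and before $R$ returns.
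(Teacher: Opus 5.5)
Your overall plan matches the paper's: show that $\ReadReq{i}=\ReadAck{i}=myReq$ holds from $A_R$ until $R$ returns, so the stream-initialization branch is never entered. Your treatment of \ReadAck{i}, and of the single service block that straddles $A_R$, is fine. The gap is in your first step. There you place $A_R$ after $R$'s write $\ReadReq{i}:=myReq$, on the grounds that $\ReadReq{i}=\ReadAck{i}$ holds at the start of every read. That invariant is false. A read can return through the fast path without the writer running any service block for reader $i$ during it, so its request is never acknowledged. For example, start from $\ReadReq{i}=\ReadAck{i}=0$. A read $R'$ writes $\ReadReq{i}:=1$ and returns on the fast path while the writer is idle. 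The next read $R$ reads $\ReadAck{i}=0$ and picks $myReq=1$, which \ReadReq{i} already holds. A write can now commit, run its service block for reader $i$, read $\ReadReq{i}=1\neq\ReadAck{i}$, and write $\ReadAck{i}:=1$ before $R$ performs its own write to \ReadReq{i}. If $R$ later finishes on the slow path, that writer step produces $A_R$. So $A_R$ can precede $R$'s write, and your argument gives no control over \ReadReq{i} in that window. If \ReadReq{i} really were $1-myReq$ there, as you assert, a service block run in that window would see $\ReadReq{i}\neq\ReadAck{i}$ and reset the stream. This is precisely the case the lemma has to exclude.

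What you need is that \ReadReq{i} already equals $myReq$ at $A_R$. Since $A_R$ lies after $R$'s initial read, the only write to \ReadReq{i} between $A_R$ and $R$'s return is then $R$'s own write of $myReq$. This is what the second case of the paper's proof establishes. The block $B$ containing $A_R$ read $rReq=myReq$ at a configuration $C_t$ where $\ReadAck{i}\neq myReq$. \ReadAck{i} keeps the value $1-myReq$ until $A_R$, so any read of reader $i$ that picks its request bit in this window writes $myReq$ to \ReadReq{i}. A clean way to get the claim in full generality, including a read whose initial read of \ReadAck{i} precedes $C_t$, is a step induction proving two facts together. First, while the writer is inside the branch between its read $rReq:=\ReadReq{i}$ and its write $\ReadAck{i}:=rReq$, we have $\ReadReq{i}=rReq\neq\ReadAck{i}$. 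Second, whenever reader $i$ has performed the initial read of \ReadAck{i} of some read but not yet its write to \ReadReq{i}, $\ReadReq{i}\neq\ReadAck{i}$ implies that this read's $myReq$ equals \ReadReq{i}. Applying the first fact to the writer step that produces $A_R$ gives $\ReadReq{i}=myReq$ there, and the rest of your argument then goes through.
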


\begin{proof}
After $A_R$, we have $\ReadAck{i}=myReq$. We show that from $A_R$ until $R$ returns, $\ReadReq{i}=myReq$ always holds. 

If configuration $A_R$ is reached after $R$ executes its write $\ReadReq{i}:=myReq$, then $\ReadReq{i}=myReq$ from that write until $R$ returns, because reader $i$ invokes no later new read and performs no
later write to \ReadReq{i} during $R$. 

Now suppose configuration $A_R$ is reached before $R$ executes its write $\ReadReq{i}:=myReq$. Let $B$ be the service block in which configuration $A_R$ is reached, and let $C_t$ be the configuration immediately after $B$ reads $rReq:=\ReadReq{i}$. Since at configuration $A_R$ we have $\ReadAck{i}=myReq$, and the writer writes to \ReadAck{i} the value $rReq$
previously read in the same service block, the writer must have read
$rReq=myReq$ at configuration $C_t$. Moreover, the service block enters the stream-initialization branch only if
$rReq\neq\ReadAck{i}$ at the read-request test. Hence, at configuration $C_t$,
$\ReadAck{i}\neq myReq$. From configuration $C_t$ until $A_R$, the writer performs no
other write to \ReadAck{i}, and no later service block can occur before
$A_R$. Therefore $\ReadAck{i}\neq myReq$ throughout the interval from $C_t$
until $A_R$.

We claim that $\ReadReq{i}=myReq$ throughout the interval from $C_t$ until
$A_R$. At configuration $C_t$, this holds because the writer read
$rReq=\ReadReq{i}=myReq$. The only process that writes to \ReadReq{i} is
reader $i$. Any such write is the first write of some read operation by
reader $i$, and it writes the opposite of the value of \ReadAck{i} read at
the beginning of that operation. Since throughout the interval from $C_t$ until
$A_R$ we have $\ReadAck{i}\neq myReq$, every such write to \ReadReq{i} writes
the value $myReq$. Thus \ReadReq{i} cannot change away from $myReq$ before
$A_R$.

Therefore, at configuration $A_R$ we have
$\ReadReq{i}=\ReadAck{i}=myReq$. We now show that this equality persists
until $R$ writes $\ReadReq{i}:=myReq$. We first prove that if configuration $A_R$ is reached before $R$ executes its write $\ReadReq{i}:=myReq$, then configuration $A_R$ is reached after $R$'s initial read of \ReadAck{i} but before that write.

Suppose by contradiction that configuration $A_R$ is reached before $R$'s initial read of \ReadAck{i}. Since configuration $A_R$ lies in the execution interval of $R$, it is reached after $R$ was invoked.
Since reader $i$ has no overlapping read operation and $R$ has not yet
performed any write to \ReadReq{i}, \ReadReq{i} does not change between
$A_R$ and $R$'s initial read. By our discussion above, at configuration $A_R$ we have
$\ReadReq{i}=\ReadAck{i}=myReq$. Thus the writer cannot enter the
stream-initialization branch for reader $i$ again before $R$'s initial read,
because that branch requires $\ReadReq{i}\neq\ReadAck{i}$. Hence $R$'s
initial read would observe $\ReadAck{i}=myReq$ and would choose the opposite
bit, contradicting the definition of $myReq$. Thus configuration $A_R$ cannot be reached before $R$'s initial read of \ReadAck{i}.

It follows that, if configuration $A_R$ is reached before $R$ executes its write
$\ReadReq{i}:=myReq$, then configuration $A_R$ is reached after $R$'s initial read of \ReadAck{i} but before that write. During this part of $R$, reader $i$
performs no write to \ReadReq{i}. Since reader $i$ has no overlapping read
operation and no other process writes \ReadReq{i}, \ReadReq{i} remains equal
to $myReq$ until $R$ writes $\ReadReq{i}:=myReq$. That later write leaves
\ReadReq{i} unchanged.

Therefore, after $A_R$ and before $R$ returns, we always have
$\ReadReq{i}=\ReadAck{i}=myReq$. The writer enters the stream-initialization
branch for reader $i$ only when these two bits differ. Hence the writer does
not enter that branch again before $R$ returns, and so it does not reset
\localVal{i} or \idx{i}.
\end{proof}

\begin{lemma}[Alternating-bit delivery]
No mailbox piece is appended by $R$ before $A_R$. After $A_R$ and before $R$
returns, the sequence of mailbox pieces appended by $R$ is always a prefix of
the fixed array \localVal{i}. If $R$ appends $L$ pieces, then the appended
sequence is exactly the whole array \localVal{i}.
\end{lemma}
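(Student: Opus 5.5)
The plan is to argue about the piece handshake through a single invariant maintained from $A_R$ until $R$ returns.

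\textbf{Before $A_R$.} The first claim follows from the code alone. $R$ appends a piece only inside the branch guarded by the test $\ReadAck{i}=myReq$. By the acknowledgement lemma, any such observation is preceded by a writer step $\ReadAck{i}:=myReq$ inside the execution interval of $R$. $A_R$ is the first such step, so every append by $R$ comes after $A_R$.

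\textbf{The invariant.} After $A_R$, I would prove by induction over the steps of the execution that the following holds at every configuration until $R$ returns. Let $k$ be the number of pieces $R$ has appended since $A_R$. Then:
\begin{itemize}
\item the appended list equals $\localVal{i}[0..k-1]$;
\item either $\PieceReady{i}=\PieceAck{i}$ and $\idx{i}=k$;
\item or $\PieceReady{i}\neq\PieceAck{i}$, $\idx{i}=k+1$, and $\Mailbox{i}=\localVal{i}[k]$.
\end{itemize}

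\textbf{Base case.} At $A_R$, the stream-initialization lemma gives $\idx{i}=0$ and that $\PieceReady{i}$ was set equal to $\PieceAck{i}$ earlier in the same branch. Between that write and $A_R$, reader $i$ could have written $\PieceAck{i}$; this is the main obstacle. I would show that no such stale acknowledgement happens. Any reader write to $\PieceAck{i}$ inside $R$ occurs only after observing $\ReadAck{i}=myReq$, hence after $A_R$. Writes belonging to an earlier read of reader $i$ are finished before $R$ is invoked. So the flush leaves the bits equal at $A_R$, with $k=0$.

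\textbf{Inductive step.} By the no-reset lemma, $\localVal{i}$ and $\idx{i}$ change afterwards only through the piece-sending branch, so I would go through the relevant steps.
\begin{itemize}
\item \emph{Writer sends a piece.} This requires that it read $\PieceAck{i}=\PieceReady{i}$. Only the reader changes $\PieceAck{i}$, and only from ``differ'' to ``equal'', so the bits are still equal when the writer writes $\Mailbox{i}:=\localVal{i}[\idx{i}]$ and flips $\PieceReady{i}$. This moves the state from the first case to the second, keeping $\idx{i}=k+1$.
\item \emph{Reader appends a piece.} The reader reads $\Mailbox{i}$ after seeing $\PieceReady{i}\neq\PieceAck{i}$. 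In the second case the writer does not touch $\Mailbox{i}$ or $\PieceReady{i}$ until $\PieceAck{i}$ changes. Hence the reader reads $\localVal{i}[k]$, and the value $\PieceReady{i}$ it copies into $\PieceAck{i}$ is still current. This returns the state to the first case with $k+1$.
\end{itemize}

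\textbf{Conclusion.} One subtle point must be checked: the reader's two reads of $\PieceReady{i}$ (the test and the copy) see the same value. This holds because the writer flips $\PieceReady{i}$ only when the bits are equal, and the reader has not yet acknowledged. With the invariant established, prefixness is immediate. Since $\idx{i}\le L$ bounds the sends, appending $L$ pieces yields exactly $\localVal{i}$.
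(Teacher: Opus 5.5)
Your structure matches the paper's. The acknowledgement lemma rules out appends before $A_R$, and after $A_R$ the alternating-bit handshake forces $R$ to consume exactly the deposits of the fixed stream; your invariant is a tidier packaging of the paper's operational argument. The handshake reasoning in your inductive step is sound, including the check that the test and the copy of \PieceReady{i} agree.

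The gap is in your base case, which is where the paper spends most of its effort. You exclude stale acknowledgements by noting that $R$ writes \PieceAck{i} only after $A_R$ and that earlier reads of reader $i$ finish before $R$ is invoked. That suffices only if the flush lies inside $R$'s execution interval, and it need not: the service block $B$ containing $A_R$ may read \ReadReq{i}, initialize the stream and flush before $R$ starts. For example:
\begin{itemize}
\item an earlier read $S$ writes request bit $q$;
\item $B$ reads $\ReadReq{i}=q$, sets \localVal{i} and \idx{i}, flushes, and is delayed;
\item $S$ returns through the fast path;
\item $R$ is invoked, still sees $\ReadAck{i}\neq q$, so $myReq=q$;
\item only then does $B$ write $\ReadAck{i}:=q$, which is $A_R$.
\end{itemize}
This is exactly the case the no-reset lemma treats. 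Between the flush and $R$'s invocation an earlier read may still be running, and your argument gives no reason why it cannot write \PieceAck{i} there. Moreover, the window must start at the flush's read of \PieceAck{i}, not at its write. If \PieceAck{i} changed between the two, the flush write would itself create a spurious mailbox signal, and the stream-initialization lemma does not exclude this.

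What is needed is a request-bit argument, as in the paper. Let $C_t$ be the configuration after $B$ reads \ReadReq{i}. The proof of the no-reset lemma gives $\ReadReq{i}=myReq$ and $\ReadAck{i}\neq myReq$ throughout the interval from $C_t$ to $A_R$, and this interval contains both flush steps. Any read of reader $i$ that writes \PieceAck{i} in this interval has request bit $myReq$: its request is either the value $B$ read at $C_t$, or it was written inside the interval, where \ReadReq{i} stays equal to $myReq$. Earlier in the same loop iteration, that read must have observed $\ReadAck{i}=myReq$. This observation cannot lie in the interval. It cannot precede $C_t$ either: once a read has seen its own acknowledgement, \ReadAck{i} stays equal to its request bit until it returns (the no-reset argument applied to that read), which contradicts $\ReadAck{i}\neq myReq$ at $C_t$. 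With this, your base case holds and the induction goes through.

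A smaller point: as stated, your invariant fails at intermediate configurations. For example, after $R$ has read \Mailbox{i} and appended it but before it writes \PieceAck{i}, $k$ has grown while the bits still differ and $\idx{i}=k$. Let $k$ count acknowledged pieces, or add the processes' program counters to the invariant.
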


\begin{proof}
Before reaching configuration $A_R$, in the same read-request branch, the writer flushes the piece handshake; that is,
it reads the current value of $\PieceAck{i}$ and then performs the flush write
$\PieceReady{i}:=\PieceAck{i}$. We first justify that $\PieceAck{i}$ does
not change between this read of $\PieceAck{i}$ and the flush write.

Let $B$ be the service block in which configuration $A_R$ is reached, and let $C_t$ be the configuration immediately after $B$ reads $rReq:=\ReadReq{i}$. We use the following fact established in the proof of the previous lemma:
$\ReadReq{i}=myReq$ and $\ReadAck{i}\neq myReq$ throughout the interval from
$C_t$ until $A_R$.

Now consider any read operation $S$ by reader $i$ that could write to
\PieceAck{i} after configuration $C_t$ and before $A_R$. Since reader $i$ executes no
overlapping read operations, either $S$ wrote its request bit before configuration $C_t$, or $S$ writes its request bit after configuration $C_t$. In the first case, $S$ has
not completed before configuration $C_t$, since it may later write to \PieceAck{i} after
configuration $C_t$. Since reader $i$ executes no overlapping read operations, no later
read operation by reader $i$ can have started, and hence no later read
operation can have written to \ReadReq{i}, before configuration $C_t$. Therefore the
value read by $B$ from \ReadReq{i} at configuration $C_t$ is exactly the request bit of
$S$. Thus the request bit of $S$ is $myReq$. In the second case, $S$ writes its request bit after configuration $C_t$ and before
$A_R$. Since $\ReadReq{i}=myReq$ throughout this interval, this write to
\ReadReq{i} must write the value $myReq$. Thus the request bit of $S$ is
$myReq$.

Thus, in either case, any read operation $S$ by reader $i$ that could write
to \PieceAck{i} after configuration $C_t$ and before $A_R$ has request bit $myReq$.
But such a write to \PieceAck{i} can occur only after $S$ passes its first
guard, which requires observing $\ReadAck{i}=myReq$. This is impossible
before $A_R$, because $\ReadAck{i}\neq myReq$ throughout the interval from
$C_t$ until $A_R$. Therefore no read operation by reader $i$ writes to
\PieceAck{i} between the flush read and the flush write.

Thus, immediately after the flush write, no mailbox piece is ready for reader
$i$, since $\PieceReady{i}=\PieceAck{i}$. Reader $i$ appends a mailbox piece
only if both guards succeed: it must first observe $\ReadAck{i}=myReq$, and
then observe $\PieceReady{i}\neq\PieceAck{i}$.

After $R$ reads \ReadAck{i} at its first step and before $A_R$, the first
guard is false by the definition of $A_R$. Hence $R$ cannot append any mailbox piece before $A_R$, including a stale piece from an earlier stream.

After the flush write but before the writer deposits the first piece of the
current stream, the second guard is false. Here, depositing a piece means that
the writer writes the piece to \Mailbox{i} and then flips \PieceReady{i}.
The flush write made $\PieceReady{i}=\PieceAck{i}$, and the read operation
$R$ cannot change \PieceAck{i} in this interval: from $R$'s initial read of
\ReadAck{i} until $A_R$, $R$ cannot pass the first guard, and after $A_R$ but
before the first deposit, $R$ cannot pass the second guard. Therefore reader $i$ cannot append a piece for the current stream until the writer has actually deposited one.

After $A_R$, by the no-reset lemma, \localVal{i} and \idx{i} are not reset
before $R$ returns. The writer deposits a piece only when
$\PieceAck{i}=\PieceReady{i}$ and $\idx{i}<L$. In that case, it first writes
$\Mailbox{i}:=\localVal{i}[\idx{i}]$, then flips \PieceReady{i}, and then
increments \idx{i}. After the flip and before reader $i$ acknowledges the piece by writing $\PieceAck{i}:=\PieceReady{i}$, we have
$\PieceAck{i}\neq\PieceReady{i}$. This is because the writer flips
\PieceReady{i} while \PieceAck{i} is unchanged, and reader $i$ has not yet
performed the acknowledgement write. Therefore the writer cannot deposit
another piece and cannot overwrite \Mailbox{i} until that acknowledgement
write occurs.

Reader $i$ appends a piece only when $\PieceReady{i}\neq\PieceAck{i}$. When it
does so, it reads \Mailbox{i}, appends that value, and acknowledges that piece.
Therefore every appended piece is exactly the most recently deposited piece
of this stream, and no deposited piece of this stream is overwritten before
it is consumed. Since the writer
increments \idx{i} exactly once per deposited piece, pieces are deposited in
strictly increasing index order. Thus the sequence of pieces appended by $R$
is always a prefix of the fixed array \localVal{i}. If reader $i$ appends
$L$ pieces, then the appended sequence is exactly all of \localVal{i}.
\end{proof}

\begin{lemma}[Slow-path correctness]
If a read returns through the slow path, then it returns the value of the latest write linearized before its slow-path linearization point.
\end{lemma}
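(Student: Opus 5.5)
The plan is to combine the lemmas already established for the slow path. Let $R$ be a read by reader $i$ that returns through the slow path, i.e., it exits the loop after appending $L$ mailbox pieces. First I will justify that $A_R$ exists and lies in $R$'s execution interval. Since $R$ appends at least one piece, it passed the guard $\ReadAck{i}=myReq$ at some point after its initial read of $\ReadAck{i}$. By the acknowledgement lemma, the writer executed a step $\ReadAck{i}:=myReq$ strictly between that initial read and this observation. Hence the first such step during $R$ is well defined, and $A_R$, which is also $R$'s linearization point, lies inside $R$'s execution interval.

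Next I will identify the returned value. By the stream-initialization lemma, the step producing $A_R$ belongs to the read-request branch of a service block of some write $W$. Before $A_R$, that branch set $\localVal{i}$ to the value of $W$ and reset $\idx{i}:=0$. By the no-reset lemma, neither $\localVal{i}$ nor $\idx{i}$ is reset again before $R$ returns. By the alternating-bit delivery lemma, $R$ appends no piece before $A_R$. Since $R$ appends $L$ pieces, the appended sequence is exactly the array $\localVal{i}$, that is, the value of $W$. So $R$ returns the value of $W$.

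Finally I will show that $W$ is the latest write linearized before $A_R$; this is the main point requiring care. The service block containing $A_R$ runs after $c(W)$, because all service blocks of a write follow its commit step, so $W$ is linearized before $A_R$. No later write is linearized before $A_R$: the writer is sequential, and a later write $W'$ begins only after $W$ has completed all its service blocks. Hence $c(W')$ comes after $A_R$. This is exactly the ``current committed value'' observation from Section~\ref{sec:notation}, which I will cite.

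The one subtle case is $A_R$ falling in a write $W$ that is pending at the end of the execution. Such a $W$ has already executed its commit step, so by the linearization rules it is still included and linearized at $c(W)$. Therefore $R$ returns the value of the latest write linearized before its slow-path linearization point, as claimed.
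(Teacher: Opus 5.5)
Your proposal is correct and takes essentially the paper's approach: you locate $A_R$ via the acknowledgement lemma, obtain the returned value from the stream-initialization, no-reset and alternating-bit delivery lemmas, and show $W$ is the latest linearized write because its service blocks follow $c(W)$ and the writer is sequential. The only differences are that you handle the pending-$W$ case explicitly, which the paper leaves implicit, and that you present the two halves in the reverse order, which does not matter.
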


\begin{proof}
Let $R$ be a slow-path read by reader $i$, and let $A_R$ be its linearization point. By the acknowledgement lemma, configuration $A_R$ lies in the execution interval of $R$. Since $R$ appends mailbox pieces only after observing $\ReadAck{i}=myReq$, and since $R$ returns through the slow path only after appending $L$ pieces, configuration $A_R$ is reached before $R$ returns. Configuration $A_R$ is reached in the service block of some write operation $W$. The service block of $W$ occurs after the commit step of $W$, so $W$ has already linearized before $A_R$. Because there is a single writer, no later write can commit before $A_R$: at configuration $A_R$, the writer is still executing the service block of $W$. Hence the latest write linearized before $A_R$ is exactly $W$. By the stream initialization lemma, before $A_R$ in the same
read-request branch, the writer has set \localVal{i} to the value of $W$. By the no-reset lemma, \localVal{i} is not reset before $R$ returns. By the alternating-bit delivery lemma, the $L$ pieces returned by $R$ are exactly the $L$ pieces of that fixed \localVal{i}. Therefore $R$ returns exactly the value of $W$, which is the latest write linearized before $A_R$.
\end{proof}


\subsection{Linearizability}
\label{sec:linproof}

\begin{theorem}
The algorithm is linearizable as an implementation of a simulated atomic SW register.
\end{theorem}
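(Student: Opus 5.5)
The proof will assemble the pieces already established. Fix any execution and take the linearization points defined in Section~\ref{sec:linpoint}: completed writes and pending writes past their commit step at $c(W)$, fast-path reads after their successful read of $V$, and slow-path reads at $A_R$. Pending writes that have not committed and pending reads are discarded. Since there is a single writer, its commit steps occur in program order, so ordering all operations by their assigned configurations gives a total order. Ties between a read and a write are impossible, because each assigned configuration immediately follows a distinct step. Ties between two reads do not matter, since reads do not affect the state.

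The first check is that every linearization point lies inside its operation's execution interval. For writes this is immediate, since the commit step belongs to $W$. For fast-path reads the read of $V$ is a step of $R$. For slow-path reads, the acknowledgement lemma places $A_R$ after $R$'s initial read of \ReadAck{i}. The slow-path correctness argument places it before $R$ returns, because appending pieces requires observing $\ReadAck{i}=myReq$. It is also worth noting that $A_R$ exists for every slow-path read: $R$ appends $L\ge 1$ pieces, each only after seeing $\ReadAck{i}=myReq$, and the acknowledgement lemma then supplies the writer step.

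The second check is that the sequential history defined by this order is legal. In a single-writer register this reduces to one condition: each read must return the value of the latest write linearized before it, or the initial value if there is none. This is exactly the content of the fast-path correctness lemma and the slow-path correctness lemma, which together cover every completed read, since a read returns only at line 10 or at the final return.

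The only step needing care is the bookkeeping for pending operations. A pending write that has committed must be included, because reads may already have observed its value through $G[V]$ or through \localVal{i}. A pending uncommitted write has had no visible effect on $V$ or the streamed values, so it can be dropped. I expect this to be routine rather than a real obstacle, since the correctness lemmas are stated in terms of committed writes.
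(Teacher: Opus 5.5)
Your proposal is correct and follows essentially the same route as the paper: you place each linearization point inside its operation's execution interval, using the acknowledgement lemma and the requirement of observing the acknowledgement before appending pieces for slow-path reads, which gives respect for real-time order. You then invoke the fast-path and slow-path correctness lemmas to show that every completed read returns the latest preceding write, and your extra remarks on ties, the existence of $A_R$, and pending writes are harmless refinements of details the paper leaves implicit.
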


\begin{proof}
Assign linearization points as defined in Section~\ref{sec:linpoint}.
All these points lie inside the corresponding operation intervals. For writes this is immediate. For fast-path reads this is immediate. For slow-path reads, it follows from the acknowledgement lemma and from the fact that the read returns only after observing the acknowledgement and appending $L$ pieces. Because each assigned point lies inside the corresponding operation interval, the induced order also respects the real-time order of non-overlapping operations.

By the fast-path correctness lemma, every fast-path read returns the value of the latest write linearized before its fast-path linearization point. By the slow-path correctness lemma, every slow-path read returns the value of the latest write linearized before its slow-path linearization point.

Therefore, in the sequential order induced by these linearization points, every read returns the value of the latest preceding write, or the initial value if no write precedes it. This is exactly the sequential specification of an atomic register. Hence the implementation is linearizable.
\end{proof}


\subsection{Wait-Freedom and Step Complexity}
\label{sec:wfproof}

Each base-register access counts as one step. Thus scanning a global buffer costs $L$ reader steps, and writing to a global buffer costs $L$ writer steps.

\begin{lemma}[Writes are bounded]
Every \Write{} operation completes in $\Theta(L+r)$ writer steps.
\end{lemma}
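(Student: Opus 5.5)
The plan is a direct count of base-register accesses along the writer's code in Figure~\ref{alg:atomic}. No other process's behavior affects this count, because the \Write{} procedure contains no loops that wait on shared variables. The only loop is the \textbf{for} loop over the $r$ readers, and its trip count is fixed. I would therefore separate the operation into its buffer-and-commit prefix (lines 1--3) and its $r$ service blocks, and bound each part from above and below.

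\textbf{Prefix.} Line 1 reads $V$, which is one step. Line 2 writes all $L$ registers of $G[s]$, which is exactly $L$ steps by the convention in Section~\ref{sec:notation}. Line 3 writes $V$, which is one step. The prefix therefore costs exactly $L+2$ steps.

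\textbf{Service blocks.} For a fixed reader $k$, I would list the accesses on the heaviest path through the block:
\begin{itemize}
\item the read of \ReadReq{k};
\item in the request branch, the read of \PieceAck{k} and the writes to \PieceReady{k} and \ReadAck{k};
\item the read of \PieceAck{k} in the piece test;
\item in the piece branch, the writes to \Mailbox{k} and \PieceReady{k};
\item the read of \AttemptReq{k} and the write to \AttemptAck{k}.
\end{itemize}
Comparisons against \ReadAck{k} and \PieceReady{k} can use the writer's local copies, since the writer is the only process that writes those registers. If an implementation instead rereads them, it adds at most a constant number of steps. Each block thus costs at least $3$ steps (the read of \ReadReq{k}, the read of \PieceAck{k}, and the read/write pair on \AttemptReq{k}/\AttemptAck{k}) and at most a constant $c\le 10$ steps. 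Local assignments to \localVal{k} and \idx{k} cost no steps.

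Summing gives a total between $L+2+3r$ and $L+2+cr$, which is $\Theta(L+r)$. Wait-freedom of writes follows immediately, since the bound is independent of the schedule.

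The main obstacle is only bookkeeping. One point is the convention about whether the writer's reads of its own registers count as steps, which affects only constants. The other is confirming that no line of \Write{} spins on a shared value. Both are settled by inspecting the pseudocode.
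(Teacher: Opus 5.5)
Your proposal is correct and takes the same approach as the paper: a direct count of the $L+2$ steps for reading $V$, writing the buffer, and committing, plus a constant number of base-register accesses in each of the $r$ service blocks, giving $\Theta(L)+\Theta(r)$. Your explicit upper bound of $10$ per block does not quite hold if the writer rereads \ReadAck{k} and \PieceReady{k} (that gives up to $12$ accesses), and the "$3$" you list is really $4$ accesses, but neither slip affects the $\Theta(L+r)$ conclusion.
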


\begin{proof}
A write reads $V$, writes $L$ pieces into the inactive buffer, writes to $V$,
and then executes one service block for each of the $r$ readers.

Writing to the inactive buffer costs $\Theta(L)$ steps. Each service block performs
only a constant number of base-register accesses. Thus all service blocks together cost $\Theta(r)$ steps,
and the whole write operation costs $\Theta(L+r)$ writer steps.
\end{proof}

Fix, for the rest of the wait-freedom argument, a read operation $R$ by
reader $i$, and let $myReq$ be the bit written by $R$ to \ReadReq{i} at its
start. A loop iteration of $R$ is called \emph{failed} if it does not return
through the fast path. Equivalently, if the iteration chooses attempt bit
$aReq$, then the final tripwire check observes $\AttemptAck{i}=aReq$. A
failed iteration is called \emph{productive} if it appends one mailbox piece,
and \emph{unproductive} otherwise.

\begin{lemma}[Failed attempts are caused by writer acknowledgements]
If an iteration of $R$ chooses attempt bit $aReq$ and fails, then after the reader's initial read of \AttemptAck{i} in that iteration and before the final tripwire check of that iteration, the writer executes a step $\AttemptAck{i}:=aReq$.
\end{lemma}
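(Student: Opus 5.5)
The argument mirrors the acknowledgement lemma, applied to the tripwire bits in place of the read-request bits. Fix an iteration of $R$ that chooses attempt bit $aReq$ and fails. The first step is to record what the reader saw at the start of the iteration. It reads \AttemptAck{i} and sets $aReq$ to the opposite of the value read. So in the configuration immediately after that read, $\AttemptAck{i}\neq aReq$.

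The second step uses the definition of a failed iteration. The iteration fails exactly when the final tripwire check observes $\AttemptAck{i}=aReq$. So at the configuration where the check is performed, $\AttemptAck{i}=aReq$.

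The third step is a single-writer argument. By the register conventions in Section~\ref{sec:notation}, only the writer writes to \AttemptAck{i}. The value of \AttemptAck{i} differs from $aReq$ immediately after the initial read and equals $aReq$ at the final check. A base register changes value only through a write to it, and all such writes here are writer steps. Hence some writer step between these two configurations writes $\AttemptAck{i}:=aReq$.

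To make this precise, I would take the last write to \AttemptAck{i} before the final check. This write must lie strictly after the initial read. Otherwise no write occurs in between, the value stays unchanged, and we would have $\AttemptAck{i}\neq aReq$ at the check, a contradiction. Since this is the last write before the check, the value it writes is the value the check observes, namely $aReq$.

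I expect no real obstacle. The only care needed is the boundary case: the writer step must occur strictly after the reader's initial read and strictly before the final check. This holds because atomic base-register steps are totally ordered, and neither of the reader's two accesses to \AttemptAck{i} is a write. As a remark, this writer step lies inside a service block for reader $i$, at line 16 of \Write, since that is the only line that writes \AttemptAck{i}. The subsequent wait-freedom argument will likely use this fact.
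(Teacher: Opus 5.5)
Your proposal is correct and follows the paper's proof: the reader's initial read makes the acknowledgement bit differ from $aReq$, the failed check observes it equal to $aReq$, and since only the writer writes that register, some writer step writing $aReq$ must occur in between. Your ``last write before the check'' refinement only spells out a step the paper leaves implicit, and your remark that this write is line 16 of a service block for reader $i$ is accurate.
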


\begin{proof}
At the beginning of the iteration, the reader reads \AttemptAck{i} and chooses $aReq$ to be the opposite bit. Therefore, immediately after that read, $\AttemptAck{i}\neq aReq$. The iteration fails only if the later tripwire check observes $\AttemptAck{i}=aReq$. Since only the writer writes to \AttemptAck{i}, the writer must have written $\AttemptAck{i}:=aReq$ between those two reader steps.
\end{proof}

For a failed iteration $I$ of $R$, we say that a service block $B$ for reader
$i$ \emph{causes the failure of $I$} if $B$ contains a write
$\AttemptAck{i}:=aReq$, where $aReq$ is the attempt bit chosen by $I$, and
this write occurs after $I$'s initial read of \AttemptAck{i} and before
$I$'s final tripwire check. By the failed-attempt lemma, every failed
iteration has at least one service block that causes its failure.

\begin{lemma}[Bounding unproductive failed iterations]
Let $P$ be the number of productive failed iterations of $R$, and let $U$ be
the number of unproductive failed iterations of $R$. Then $U \leq P+1$.
\end{lemma}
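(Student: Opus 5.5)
The plan is to show that among any two consecutive failed iterations of $R$, the later one is productive unless $R$ has already collected all $L$ pieces. Since $R$ exits the loop once it has $L$ pieces, this gives $U \le P+1$.

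Let $I_1$ and $I_2$ be failed iterations of $R$ with $I_1$ earlier. By the failed-attempt lemma, a service block $B_2$ causes the failure of $I_2$. Its write $\AttemptAck{i}:=aReq$ comes after $I_2$'s initial read of \AttemptAck{i}, hence after the end of $I_1$. Likewise, some service block $B_1$ causes the failure of $I_1$, and its write to \AttemptAck{i} comes before the end of $I_1$. So $B_1$ and $B_2$ are distinct service blocks for reader $i$, and the writer executes $B_1$ completely before starting $B_2$.

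The key steps, in order, are as follows.

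\textbf{Step 1: $I_2$ passes the first guard.} At its first step, $B_2$ tests whether $\ReadReq{i}\neq\ReadAck{i}$. This test happens after $R$ wrote $\ReadReq{i}:=myReq$, because $B_1$ already lies after $R$'s first loop iteration began. If the branch is taken, $B_2$ writes $\ReadAck{i}:=myReq$. If not, then $\ReadAck{i}=\ReadReq{i}=myReq$ already holds; the acknowledgement lemma and the no-reset lemma ensure this value was written during $R$. Either way, $A_R$ occurs no later than the end of $B_2$'s read-request branch. Afterwards \ReadAck{i} stays $myReq$ until $R$ returns, by the argument in the no-reset lemma. Since $I_2$ checks \ReadAck{i} after $B_2$'s tripwire write, $I_2$ passes the first guard.

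\textbf{Step 2: a piece is ready when $I_2$ checks.} After $A_R$, consider $B_2$'s piece test, which occurs before its tripwire write and hence before $I_2$'s check of \PieceReady{i}. Suppose $\idx{i}<L$ there. If $\PieceAck{i}=\PieceReady{i}$, then $B_2$ deposits a piece and flips \PieceReady{i}. Otherwise a deposited piece is already unacknowledged. In both cases $\PieceReady{i}\neq\PieceAck{i}$ holds when $I_2$ reaches its check. Only $R$ can change \PieceAck{i}, and it does so only right after appending a piece. So the mismatch persists, with one exception: $R$ consumed that piece in $I_1$ or earlier. That would require $R$'s ack to come after the deposit, while $I_1$ ended before $B_2$ began. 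The deposit happened in $B_2$ or earlier, so I handle this by choosing $B_2$ carefully as the \emph{first} block causing $I_2$'s failure and arguing monotonically. Then $I_2$ reads \Mailbox{i} and appends, so it is productive.

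\textbf{Step 3: the case $\idx{i}=L$.} If $\idx{i}=L$ at $B_2$'s piece test, then all $L$ pieces were already deposited after $A_R$. By alternating-bit delivery, every piece except possibly the last has been consumed. Either the last is pending, and $I_2$ is productive by the Step 2 argument, or $R$ already holds $L$ pieces and exits the loop, so $I_2$ never runs.

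\textbf{Counting.} Any two consecutive failed iterations therefore include at least one productive one; more precisely, every failed iteration after the first is productive unless it is the last. This gives $U\le P+1$.

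The main obstacle is Step 2. The difficulty is showing rigorously that the piece made ready by or before $B_2$ has not already been consumed in a way that leaves $I_2$ seeing $\PieceReady{i}=\PieceAck{i}$. My first approach is to track the invariant that between $A_R$ and $R$'s return, the parity $\PieceReady{i}\oplus\PieceAck{i}$ is $1$ exactly when a deposited piece is unconsumed. I then combine this with the ordering: all of $R$'s acks before $I_2$'s check either precede $B_2$ (and are seen by $B_2$'s test) or never happen.
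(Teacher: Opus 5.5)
There is a genuine gap in Step 2, and it comes from trying to prove too much. You set out to show that the later of any two consecutive failed iterations is productive unless $R$ already holds $L$ pieces. To support this you assert that ``$I_1$ ended before $B_2$ began.'' That assertion is false. Only the last step of $B_1$, its write to \AttemptAck{i}, is known to precede $I_1$'s tripwire check. So $B_2$ can start in the window between $I_1$'s tripwire check and $I_1$'s piece-ready test.

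Here is a concrete schedule with $r=1$ and $L\ge 2$.
\begin{enumerate}
\item $R$ announces $myReq=1$, and $I_1$ arms $aReq=1$.
\item A full write $W_1$ runs. Its block $B_1$ acknowledges $R$, deposits piece $0$, and sets $\AttemptAck{i}:=1$.
\item $I_1$ fails its tripwire check.
\item The next write $W_2$ commits. Its block $B_2$ sees $\ReadReq{i}=\ReadAck{i}$, sees the pending piece at its piece test, deposits nothing, and pauses.
\item $I_1$ now consumes piece $0$ and acknowledges it.
\item $I_2$ arms $aReq=0$.
\item $B_2$ finishes by writing $\AttemptAck{i}:=0$.
\item $I_2$ fails and finds $\PieceReady{i}=\PieceAck{i}$.
\end{enumerate}
So $I_1$ is productive, while $I_2$ is failed and unproductive with only one piece collected. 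Moreover, $B_2$ is the unique, hence first, block causing $I_2$'s failure. Choosing $B_2$ ``first,'' or tracking the parity $\PieceReady{i}\oplus\PieceAck{i}$, cannot rescue the claim: $R$'s acknowledgement in $I_1$ falls after $B_2$'s piece test and before $I_2$'s check. For the same reason, your ``more precisely'' statement (every failed iteration after the first is productive unless it is the last) is false. Patterns such as productive, unproductive, productive, unproductive, \dots\ do occur.

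The missing idea is to use the hypothesis that $I_1$ is itself unproductive, as the paper does. Assume $I_1$ and $I_2$ are both failed and unproductive. Then $B_2$ begins during $I_1$ or later, and neither iteration appends a piece. Hence reader $i$ does not write \PieceAck{i} from the start of $B_2$ through $I_2$'s piece-ready test. Whatever mismatch $\PieceReady{i}\neq\PieceAck{i}$ exists or is created at $B_2$'s piece test therefore survives to $I_2$'s test. The case $\idx{i}=L$ with no pending piece forces $R$ to already hold all $L$ pieces, so $I_2$ cannot occur. This yields only the weaker fact that no two consecutive failed iterations are both unproductive. That weaker fact is exactly what your counting paragraph needs for $U\le P+1$, and your Steps 1 and 3 go through essentially unchanged once Step 2 is repaired this way.
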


\begin{proof}
We first show that no two consecutive loop iterations of $R$ can both be
failed and unproductive.

Suppose, for contradiction, that $I_1$ and $I_2$ are consecutive loop
iterations of $R$, and that both are failed and unproductive. By the
failed-attempt lemma, there is a service block $B_1$ for reader $i$ that
causes the failure of $I_1$, and there is a service block $B_2$ for reader
$i$ that causes the failure of $I_2$. Since $B_1$ causes the failure of $I_1$, the write to \AttemptAck{i} in
$B_1$ occurs during $I_1$. Since $B_2$ causes the failure of $I_2$, the write
to \AttemptAck{i} in $B_2$ occurs during $I_2$. The writer is sequential, and
the write to \AttemptAck{i} is the last step of a service block for reader
$i$. Therefore $B_2$ begins after the write to \AttemptAck{i} in $B_1$.
Since that write occurs during $I_1$, the service block $B_2$ begins during
$I_1$ or later. In particular, $B_2$ reads \ReadReq{i} after $R$ has already
written $\ReadReq{i}:=myReq$.

Consider the read-request part of $B_2$, where the writer reads
$rReq:=\ReadReq{i}$ and, if $rReq\neq\ReadAck{i}$, initializes the stream
and writes $\ReadAck{i}:=rReq$. Since this part occurs after $R$
has written $\ReadReq{i}:=myReq$, the writer reads $rReq=myReq$. If the
writer observes $\ReadAck{i}\neq myReq$, then $B_2$ initializes the stream
for $R$ and writes $\ReadAck{i}:=myReq$ before executing its piece-transfer
test, namely the test whether $\PieceAck{i}=\PieceReady{i}$ and
$\idx{i}<L$. If instead the writer observes $\ReadAck{i}=myReq$, then by the
acknowledgement lemma, configuration $A_R$ has already been reached.
In either case, before $B_2$ executes its piece-transfer test, $A_R$ exists
and $\ReadAck{i}=myReq$. Moreover, by the no-reset lemma, after $A_R$ and
before $R$ returns, the writer does not enter the stream-initialization
branch again.

Thus, after $I_2$ fails its final tripwire check, namely the check whether
$\AttemptAck{i}\neq aReq$, it will pass the first slow-path guard
$\ReadAck{i}=myReq$ and reach the piece-ready test, namely the test whether
$\PieceReady{i}\neq\PieceAck{i}$.

We next show that \PieceAck{i} does not change from the beginning of $B_2$
until the piece-ready test of $I_2$. The only process that writes to
\PieceAck{i} is reader $i$, and reader $i$ writes to \PieceAck{i} only after
appending a mailbox piece. The service block $B_2$ begins during $I_1$ or
later, and its write to \AttemptAck{i} occurs during $I_2$, before the final
tripwire check of $I_2$. Since $I_1$ and $I_2$ are consecutive loop
iterations and both are unproductive, reader $i$ appends no mailbox piece
from the beginning of $B_2$ through the piece-ready test of $I_2$. Therefore
\PieceAck{i} does not change during this interval. Let $PA$ denote this
constant value of \PieceAck{i}.

Now consider the piece-transfer test of $B_2$.

First suppose that, at this test, $\PieceReady{i}\neq PA$. Then a mailbox
piece is already ready for reader $i$, and $B_2$ does not deposit another
piece, because the deposit condition requires
$\PieceAck{i}=\PieceReady{i}$. Since \PieceAck{i} remains equal to $PA$, and
since the writer does not enter the stream-initialization branch again before
$R$ returns, no later service block before the piece-ready test of $I_2$ can
make this ready piece disappear or overwrite it. So, any later
piece-transfer test also sees $\PieceAck{i}\neq\PieceReady{i}$ and therefore
cannot deposit another piece.

Next suppose that, at the piece-transfer test of $B_2$,
$\PieceReady{i}=PA$ and $\idx{i}<L$. Then $B_2$ deposits the next piece: it
writes the piece to \Mailbox{i}, flips \PieceReady{i}, and increments
\idx{i}. Since \PieceAck{i} remains equal to $PA$, after the flip we have
$\PieceReady{i}\neq\PieceAck{i}$. As in the previous case, no later service
block before the piece-ready test of $I_2$ can enter the stream-initialization
branch or deposit another piece, so this ready piece remains ready until
$I_2$ reaches its piece-ready test.

It remains to consider the case in which, at the piece-transfer test of
$B_2$, $\PieceReady{i}=PA$ and $\idx{i}=L$. In this case no mailbox piece is
ready, and the writer has already deposited all $L$ pieces of the fixed
stream for $R$. Since no mailbox piece is ready, the last deposited piece has
already been acknowledged by reader $i$, meaning reader $i$ has written
$\PieceAck{i}:=\PieceReady{i}$ after reading that piece from \Mailbox{i}.
Reader $i$ acknowledges a piece only after appending it. Since reader $i$ has
no overlapping read operation, those acknowledgements after $A_R$ belong to
$R$. Therefore $R$ has already appended all $L$ pieces of the fixed stream.
By the alternating-bit delivery lemma, the sequence appended by $R$ is
exactly the whole fixed array \localVal{i}. Hence the loop condition of $R$
is already false, so $R$ cannot execute the later iteration $I_2$. This
contradicts the assumption that $I_2$ is a loop iteration of $R$.

Thus, in every possible case in which $I_2$ can occur, when $I_2$ reaches its
piece-ready test we have $\PieceReady{i}\neq\PieceAck{i}$. Since $I_2$ also
passes the first guard $\ReadAck{i}=myReq$, it appends a mailbox piece. This
contradicts the assumption that $I_2$ is unproductive. Therefore no two
consecutive loop iterations of $R$ can both be failed and unproductive.

It follows that among the failed iterations of $R$, no two unproductive
iterations occur consecutively. Hence the unproductive failed iterations can
appear only before the first productive failed iteration, after a productive
failed iteration, or not at all. Therefore the number of unproductive failed
iterations is at most one more than the number of productive failed
iterations. Thus $U \leq P+1$.
\end{proof}

\begin{lemma}[Bounded number of read iterations]
Every \ReadOp{i} operation executes at most $2L+1$ loop iterations.
\end{lemma}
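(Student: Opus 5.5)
We classify each loop iteration of $R$ and count each class separately. Every iteration that $R$ executes is one of three kinds: a successful iteration, which returns through the fast path; a productive failed iteration; or an unproductive failed iteration. A successful iteration ends the read, so at most one iteration is successful, and it must be the last. With $P$ and $U$ as in the lemma bounding unproductive failed iterations, the total number of iterations is therefore at most $P+U+1$.

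Next we bound $P$. Each productive failed iteration appends exactly one mailbox piece to $pieces$. The loop guard is $\text{length}(pieces) < L$, so once $L$ pieces have been appended the loop exits and no further iteration starts. Since $pieces$ starts empty and never shrinks, this gives $P \le L$.

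Combining with the lemma bounding unproductive failed iterations, $U \le P+1 \le L+1$. Adding the possible successful iteration gives at most $L + (L+1) + 1 = 2L+2$ iterations, which is one more than the statement claims. The main obstacle is removing this extra one, and we do it by case analysis.

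\emph{Case 1: $R$ ends with a successful iteration.} We only need $P+U \le 2L$. If $P=L$, then $pieces$ already has length $L$ after the $L$-th productive failed iteration, so no later iteration of any kind runs. This contradicts the final iteration being successful, so $P \le L-1$. Hence $P+U+1 \le 2P+2 \le 2L$.

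\emph{Case 2: $R$ returns through the slow path.} There is no successful iteration, so the count is $P+U \le 2P+1 = 2L+1$.

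In both cases $R$ executes at most $2L+1$ loop iterations, as claimed.

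Finally, the no-two-consecutive-unproductive argument in the previous lemma assumed that the second iteration $I_2$ exists. Our counting only counts iterations that actually execute, and it uses that lemma's bound $U \le P+1$ directly, so no further justification is needed.
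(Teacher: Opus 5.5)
Your proof is correct and follows the paper's argument: you bound $P \le L$ from the loop guard and apply $U \le P+1$. You then split on the return path, using $P \le L-1$ and the extra successful iteration in the fast-path case to get at most $2L$, and $P = L$ in the slow-path case to get at most $2L+1$.
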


\begin{proof}
Each productive failed iteration appends exactly one piece. Since the loop condition is that fewer than $L$ pieces have been appended, there can be at most $L$ productive failed iterations. Let $P$ be the number of productive failed iterations and $U$ the number of unproductive failed iterations. By the preceding lemma, $U\leq P+1$. If the read terminates through the slow path, then it terminates immediately after the $L$th productive failed iteration. Thus $P=L$, and the total number of iterations is $P+U\leq L+(L+1)=2L+1$. If the read terminates through the fast path, then it has one final successful iteration. Before that final successful iteration, it has $P\leq L-1$ productive failed iterations. Hence the total number of iterations is at most $P+U+1\leq P+(P+1)+1=2P+2\leq 2L$. In either case, the number of loop iterations is at most $2L+1$.
\end{proof}

\begin{theorem}
Every write completes in $\Theta(L+r)$ writer steps, and every read completes in
$\Theta(L^2)$ reader steps in the worst case. Therefore, the implementation is wait-free.
\end{theorem}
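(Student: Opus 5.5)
The theorem combines two earlier results. The write bound is exactly the ``writes are bounded'' lemma, so nothing new is needed there. For reads, the plan is to multiply the bound on loop iterations by the cost of a single iteration and add the constant overhead outside the loop. Wait-freedom then follows, because both bounds are finite and depend only on $L$ and $r$, not on the schedule.

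\textbf{Upper bound for reads.} Outside the loop, a read performs one read of \ReadAck{i} and one write of \ReadReq{i}. These are $O(1)$ steps, since the final return involves no base-register access. A single loop iteration costs:
\begin{itemize}
\item one read of \AttemptAck{i} and one write of \AttemptReq{i};
\item one read of $V$ and a scan of $G[s]$, which is $L$ steps;
\item one read of \AttemptAck{i} for the tripwire check;
\item at most one read of \ReadAck{i}, one or two reads of \PieceReady{i} (the second being the read used in the acknowledgement), one read of \Mailbox{i}, and one write of \PieceAck{i}.
\end{itemize}
Here the comparisons with the reader's own bit \PieceAck{i} can be counted as a read, or as a locally cached value. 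Either way this is $O(1)$. Each iteration therefore costs $L+O(1)$ steps. By the bounded-iterations lemma there are at most $2L+1$ iterations, giving at most $(2L+1)(L+O(1))+O(1)=O(L^2)$ reader steps.

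\textbf{Matching lower bound for reads.} The $\Theta$ claim is a worst case, so I need an execution in which some read takes $\Omega(L^2)$ steps. The plan is to construct a schedule in which every fast-path attempt fails, forcing the read to collect all $L$ pieces through the slow path. Each of the at least $L$ iterations then includes a full scan of $L$ steps. Concretely, let reader $i$ start a read and write its request. Then, in each iteration, after the reader arms its tripwire, let the writer run one complete \Write{}. Its service block for reader $i$ does three things: it acknowledges the request (in the first write), deposits the next piece (because the previous piece has been acknowledged), and sets $\AttemptAck{i}:=aReq$. Afterwards the reader performs its scan and fails the check, which passes the first guard, takes the piece and acknowledges it. This yields exactly one piece per iteration. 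After $L$ iterations the read returns via the slow path, having spent at least $L\cdot L$ steps on scans. When $r\ge1$ and $L\ge1$ this shows $\Omega(L^2)$.

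\textbf{Main obstacle.} The step I expect to need the most care is checking this adversarial schedule against the handshake logic, in particular that the first write's service block both acknowledges and deposits a piece in the same block. This follows from the program order of the service block: the stream-initialization branch precedes the piece-transfer test, and the flush makes $\PieceAck{i}=\PieceReady{i}$ with $\idx{i}=0<L$. Wait-freedom then holds because both reads and writes finish in a bounded number of their own steps.
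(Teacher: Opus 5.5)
Your proposal is correct and follows essentially the same route as the paper. The write bound comes from the write-bound lemma, the read upper bound multiplies the $2L+1$ iteration bound by the $\Theta(L)$ per-iteration cost, and the $\Omega(L^2)$ witness is the same adversarial schedule: one writer service block per iteration trips the wire and delivers one piece, over $L$ consecutive iterations. Your explicit check that the first service block both acknowledges the request and deposits a piece (because of the flush and $\idx{i}=0<L$) fills in a detail the paper leaves implicit.
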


\begin{proof}
By the write-bound lemma, every write operation completes after $\Theta(L+r)$ writer steps.

For a read operation, each loop iteration performs a constant number of base-register accesses and one scan of a global buffer, costing $\Theta(L)$ reader steps. By the bounded-iteration lemma, a read executes at most $2L + 1$ loop iterations. 
Furthermore, a slow-path read executes at least $L$ iterations since it appends at most a single piece in each iteration. Such slow-path executions exist: after the reader announces its request, the scheduler can make the writer execute a service block for reader $i$ during each of $L$ consecutive read iterations, causing each tripwire check to fail while delivering one mailbox piece per iteration.
Therefore each read completes in $\Theta(L^2)$ reader steps in the worst case. These bounds depend only on $r$ and $L$, not on the relative speeds or failures of other processes. Therefore every operation completes in a bounded number of its own steps. The implementation is wait-free.
\end{proof}





    \section{Conclusion}
\label{sec:concl}

We study the problem of implementing $m$-valued SW registers from $b$-valued base registers of the same type and prove matching upper and lower bounds on its space complexity.
On the lower-bound side, we showed that any obstruction-free implementation with an invisible reader uses at least $\left\lceil\frac{m-1}{b-1}\right\rceil$ base registers, an exponential improvement over the trivial information-theoretic $\Omega\!\left(\frac{\log m}{\log b}\right)$ lower bound and the constant factor improvements due to Chaudhuri and Welch~\cite{chaudhuri1994bounds}. We extended this to a $\left\lceil\min\!\left(\frac{m-1}{b-1},\, r+\frac{\log m}{\log b}\right)\right\rceil$ lower bound for all implementations, whether at least one reader is invisible or every reader is visible. On the upper-bound side, we gave a wait-free implementation of a multi-word atomic register from atomic base registers using $\Theta\!\left(r+\frac{\log m}{\log b}\right)$ space. Combining this with existing $\Theta\!\left(\frac{m}{b}\right)$ space constructions~\cite{chaudhuri1994bounds,VIDYASANKAR1991323} yields an upper bound of $\Theta\!\left(\min\!\left(\frac{m}{b},\, r+\frac{\log m}{\log b}\right)\right)$, which asymptotically matches our lower bound. This improves upon the previous best upper bound of $\Theta\!\left(\min\!\left(\frac{m}{b},\, r\frac{\log m}{\log b}\right)\right)$.

These bounds are robust in two senses. The lower bound only requires the simulated register to be regular and it only requires obstruction-free progress, whereas the upper bound ensures the simulated register is atomic and guarantees wait-free progress. The two therefore pin down the space complexity for the
entire range of settings in between. 


Chen and Wei's visible-reader algorithm uses $\Theta(r\frac{\log m}{\log b})$ space and $\Theta(\frac{\log m}{\log b})$ steps per read and write~\cite{chen2017step}, whereas our asymptotically space-optimal algorithm uses $\Theta(r+\frac{\log m}{\log b})$ space, $\Theta(\frac{\log^2 m}{\log^2 b})$ read steps, and $\Theta(r+\frac{\log m}{\log b})$ write steps. Whether the optimal $\Theta(r+\frac{\log m}{\log b})$ space bound can be achieved together with $\Theta(\frac{\log m}{\log b})$ read and write step complexity remains open.
\newline\newline
\noindent\textbf{AI Disclosure.} We used Claude Opus 4.8 and GPT 5.5 to help polish writing throughout the paper. We also used Gemini 3.1 to test and generate counterexamples to earlier iterations of our algorithm. The authors verified the correctness and originality of all content, including references.

    
    \bibliographystyle{plainurl}
    \bibliography{biblio}
\end{document}